\DeclareMathOperator*{\argmin}{argmin}
\newtheorem{definition}{Definition}
\newtheorem{theorem}{Theorem}
\newtheorem{lemma}{Lemma}
\newtheorem{corollary}{Corollary}
\newtheorem{statement}{Statement}
\newtheorem{proposition}{Proposition}
\newcommand{\SF}{SF}
\newcommand{\LL}{LL}
\newcommand{\CN}{CN}
\newcommand{\NL}{NL}
\newcommand{\OURALGO}{ABS}
\begin{document}

\title{
A Fragmentation-Aware Adaptive Bilevel Search Framework for Service Mapping in Computing Power Networks
}
\author{Jingzhao Xie, Zhenglian Li, Gang~Sun,~\IEEEmembership{Senior~Member,~IEEE}, Long Luo,~\IEEEmembership{Member,~IEEE}, and Hongfang~Yu,~\IEEEmembership{Senior~Member,~IEEE}
\thanks{Jingzhao Xie, Zhenglian Li, Gang~Sun, Long Luo, and Hongfang Yu are with University of Electronic Science and Technology of China, Chengdu 611731, China
(e-mail: jingzhaoxie@gmail.com; zhenglianli@foxmail.com; \{gangsun, llong, yuhf\}@uestc.edu.cn).
}
\thanks{This work was supported in part by National Natural Science Foundation of China (62394324).}
\thanks{Gang Sun is the corresponding author of this paper.
}
}


\markboth{Journal of \LaTeX\ Class Files,~Vol.~14, No.~8, August~2021}%
{Shell \MakeLowercase{\textit{et al.}}: A Sample Article Using IEEEtran.cls for IEEE Journals}



\vspace{-10em}
\maketitle
\begin{abstract}
Computing Power Network (CPN) unifies wide-area computing resources through coordinated network control, while cloud-native abstractions enable flexible resource orchestration and on-demand service provisioning atop the elastic infrastructure CPN provides.
However, current approaches fall short of fully integrating computing resources via network-enabled coordination as envisioned by CPN.
In particular, optimally mapping services to an underlying infrastructure to maximize resource efficiency and service satisfaction remains challenging.
To overcome this challenge, we formally define the service mapping problem in CPN, establish its theoretical intractability, and identify key challenges in practical optimization.
We propose Adaptive Bilevel Search (ABS), a modular framework featuring (1) graph partitioning-based reformulation to capture variable coupling, (2) a bilevel optimization architecture for efficient global exploration with best-response solving of local subproblems, and (3) fragmentation-aware evaluation for long-term performance guidance. 
Implemented using distributed particle swarm optimization, ABS is extensively evaluated across diverse CPN scenarios, consistently outperforming existing approaches. 
Notably, in complex scenarios, ABS achieves up to 73.2\% higher computing resource utilization and a 60.2\% higher service acceptance ratio compared to the best-performing baseline.
\end{abstract}
\begin{IEEEkeywords}
Computing power network, resource allocation, graph partitioning, bilevel optimization, distributed evolutionary algorithms.
\end{IEEEkeywords}

\section{Introduction}
\label{secs:intro}

\IEEEPARstart{T}{he} rapid development of emerging applications---including large-scale AI model training and inference, extended reality (XR) applications, and widespread Internet of Things (IoT) ecosystems---is making network services increasingly complex~\cite{com2net}.
These novel services exhibit diverse requirements and dynamic characteristics, consume substantial computing resources, and necessitate flexible, on-demand provisioning across all scenarios~\cite{standing}.
However, in this context, conventional network architecture with dedicated infrastructure struggles to satisfy diverse service requirements and often creates resource silos, leading to resource waste and service dissatisfaction~\cite {CNC}.
In response, Computing Power Network (CPN) has emerged, integrating distributed heterogeneous computing resources via the network to form an infrastructure characterized by resource pooling and service elasticity, thereby enabling on-demand resource provisioning across domains~\cite{CNC, det3}.

\begin{figure}[!tp]
    \centering
    \rotatebox{0}{\includegraphics[width=0.8\linewidth]{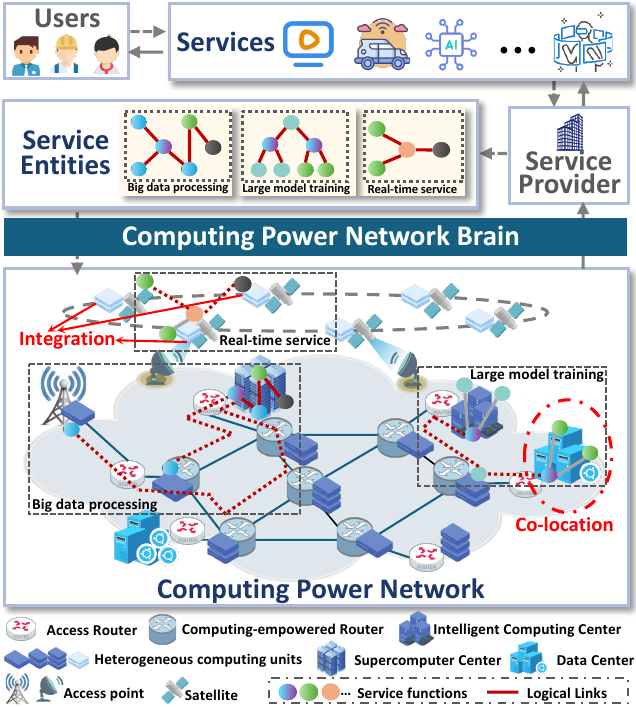}}
    \caption{
    Conceptual overview of service provisioning and mapping in the CPN environment.}
    \label{fig: architecture}
    \vspace{-10pt}
\end{figure}

To fully realize the potential of CPN, adopting a cloud-native perspective---characterized by declarative service provisioning and on-demand resource supply---is widely considered essential~\cite{service-oriented, CNC}.
The cornerstone of this approach is decoupling services from physical resources and abstracting them into self-contained logical entities known as service entities (SEs). 
SEs define the essence and behavior of services, typically comprising service functions (SFs) that provide business capabilities, logical links (LLs) transmitting data flows between SFs, and descriptions of business logic and resource requirements~\cite{service-oriented, serviceentities}.
Fig.~\ref{fig: architecture} illustrates this model within CPN: users specify desired service capabilities, which the service provider translates into SEs.
These SEs are then dynamically mapped onto the underlying physical infrastructure by the CPN brain~\cite{CNC}, which orchestrates resources in accordance with real-time infrastructure conditions.
This mapping process enables flexible service provisioning and dynamic resource allocation. 
More importantly, it lays the foundation for integrating distributed computing resources---the central objective of CPN.
Specifically, service functions of SEs are mapped to distributed computing nodes and interconnected through logical links, enabling previously isolated resources to collaborate over physical networks as a cohesive system that delivers unified services~\cite{serviceoriented}.
Consequently, the network evolves from a mere connectivity medium into a critical enabler for this integration, which in turn requires coordinated control of both \SF\ placement and \LL\ mapping.

Existing mapping approaches can be broadly categorized into Directed Acyclic Graph (DAG) models~\cite{dagsurvey, dagqoe, dagdrl}, Service Function Chaining (SFC)~\cite{sfccolocate, monte}, and Virtual Network Embedding (VNE)~\cite{originvne, rmd}.
Among them, VNE enables fine-grained resource allocation and supports arbitrary SE topologies, making it a natural starting point for CPN.
In comparison, DAG-based methods typically adopt coarse-grained network abstractions, while SFC-based ones are primarily tailored to chain-like service structures.
Beyond representational expressiveness, another key consideration for CPN is \SF\ co-location: modern platforms make it an important optimization knob (e.g., via containerization and virtual links~\cite{klonet}), as strategically co-locating \SF s can improve node utilization~\cite{beware} and reduce network overhead by shifting traffic to intra-node or intra-domain communication~\cite{megate}.
Such benefits directly align with the CPN objective of network-enabled computing integration.
However, while many existing formulations permit co-location~\cite{25vne}, VNE-style pipelines typically treat it as a feasibility option rather than an optimization lever, i.e., they do not explicitly decide which \SF s to co-locate or account for how co-location reshapes resource usage and overall system performance~\cite{energyvne, pgcnn, aivne, fan2023node, tccn5g}.
Moreover, theoretical guidance and systematic analysis remain lacking on how \SF s should be co-located to fully exploit co-location opportunities under online arrivals and long-term objectives.
Specifically, explicitly optimizing co-location introduces three key challenges: (i) it tightly couples computing and network mapping decisions, (ii) it expands the decision space exponentially, making search prone to poor local optima, and (iii) it renders myopic per-request metrics insufficient to capture long-term fragmentation and future admissibility (details in Section~\ref{secs:model}).

In this work, we propose a Service Entity Mapping (SEM) scheme that extends typical VNE by explicitly optimizing \SF\ co-location as an integral part of the decision process.
We focus on developing a general framework for solving SEM problems, aiming to fully leverage the network's enabling capabilities to enhance the utilization of computing resources and improve service satisfaction.
The main contributions are summarized as follows:

\begin{enumerate}
   \item We introduce Service Entity Mapping (SEM), an extension of VNE tailored to CPN environments. SEM elevates \SF\ co-location from a feasibility option to an explicit mapping decision, capturing compute–network coupling and enabling more effective integration of distributed computing resources.
   \item We formulate the online SEM problem as a mathematical model that optimizes long-term system performance by balancing service satisfaction and computing resource utilization against maintaining network overhead. We formally prove the problem’s NP-hardness, and discuss key challenges for practical optimization.
   \item We develop an Adaptive Bilevel Search (ABS) framework for solving the SEM problem. This framework integrates a graph partitioning perspective for generating co-location-aware mapping decisions, a bilevel optimization structure combined with an adaptive feedback mechanism for robust global exploration, and resource fragmentation metrics for long-term performance evaluation. This comprehensive design empowers ABS to iteratively refine mapping decisions via feedback, bridging global and local optimization and enabling continuous evolution toward overall system optimality. 
   \item We validate the efficacy and superiority of the proposed ABS framework through extensive simulations. Specifically, we implement ABS by leveraging a distributed particle swarm optimization algorithm. Our comprehensive experimental results demonstrate that ABS significantly outperforms state-of-the-art methods across key performance metrics, including system profit, service request acceptance ratio, and long-term average revenue. 
\end{enumerate}

The rest of the paper is organized as follows.
Section~\ref{secs:related} reviews related works.
Section~\ref{secs:model} formulates the mathematical model of the online SEM problem, analyzes its theoretical and practical difficulties, and provides key solution insights.
Section~\ref{sec: framework} proposes the design of the ABS framework and presents its implementation details.
Section~\ref{secs:evaluation} discusses the simulation results.
Finally, Section~\ref{secs:conclusion} concludes the paper.

\section{Related Work}
\label{secs:related}

We present the service mapping schemes and discuss key technologies for \SF\ co-location in CPN environments.

\subsection{Service Mapping Schemes}

\textbf{DAG-based mapping:}
It models service entities as DAGs, where nodes represent service functions and edges capture their dependencies~\cite{dagsurvey}. 
This structure clarifies functional relationships and facilitates efficient mapping of \SF s to distributed nodes.
For example, Fan et al.~\cite{dagqoe} studied task scheduling in distributed networks, proposing a heuristic hierarchical multi-queue algorithm to determine the execution order and mapping of \SF s jointly.
Goudarzi et al.~\cite{dagdrl} investigated service mapping in edge and fog computing by modeling SEs as DAGs. 
To improve adaptability, a distributed deep reinforcement learning (DRL) approach is employed, utilizing parallel actors and aggregated experiences for policy updates.
However, DAG-based methods primarily portray logical dependencies, often simplifying network connections as point-to-point channels that only reflect transmission delays~\cite{dagdrl}. 
This abstraction overlooks the allocation of intermediate network links, which limits fine-grained coordination between computing and network resources and thus constrains computational integration in network-constrained CPN scenarios.

\textbf{SFC-based mapping:}
Service entities are represented as linear function chains, explicitly considering fine-grained network links and corresponding resource requirements between service functions. 
He et al.\cite{serviceoriented} addressed heterogeneous resource orchestration in space-air-ground integrated networks by modeling SFC mapping and resource allocation as a nonconvex optimization problem and proposing an alternating iterative algorithm. 
Dong et al.\cite{standing} proposed a federated meta-reinforcement learning method for fast, privacy-preserving policy adaptation in SFC-based 6G network slicing, while Zhang et al.\cite{sfcmigration} introduced a heuristic for joint SFC placement and virtual network function migration in multi-domain networks.
Most of these works do not explore the co-location of \SF s. 
Notably, Wang et al.\cite{sfccolocate} investigated co-location strategies to improve resource utilization and latency performance, and proposed a parallelized SFC deployment algorithm on a heterogeneous architecture for enhanced flexibility and performance.
Nevertheless, the linear chaining model of SFC limits its implementation flexibility to support complex topologies and diverse service requirements, making it less suitable for CPN-oriented service mapping.

\textbf{VNE-based mapping:}
Compared to DAG and SFC, the VNE scheme is better suited for service mapping in CPN environments due to its support for both fine-grained network resource allocation and arbitrary SE topologies. 
Zhang et al.~\cite{pgcnn} proposed a DRL-based Quality-of-Service (QoS)-aware VNE algorithm for space-air-ground-ocean integrated networks, which utilizes K-means clustering to classify each request's QoS requirements and dynamically optimizes mapping strategies for improved resource utilization.
Geng et al.~\cite{gal} proposed GAL-VNE, which combines global reinforcement learning with local one-shot neural prediction. 
This approach globally optimizes resource allocation across requests and performs mapping efficiently at the local level.
Guan et al.~\cite{multidimensional} addressed multidimensional resource fragmentation to optimize edge resource utilization, and proposed a fragmentation-aware bilevel VNE algorithm that enhances request acceptance and infrastructure profit.
However, these approaches remain limited as they do not explicitly leverage \SF\ co-location to pack their demands more efficiently into node capacities or localize traffic to alleviate bandwidth bottlenecks, hindering full network-enabled computing integration in CPN scenarios.
Zhang et al.~\cite{rmd} explicitly allowed multiple virtual nodes from the same virtual network to be co-located on a single substrate node to improve mapping efficiency. 
However, as a heuristic algorithm tailored for network emulation
platforms, this work is not generally applicable to CPN
environments and is still prone to local optima.

\subsection{Service Function Co-location Technologies}

\textbf{\SF\ Instantiation:} 
Containerization has become a foundational technology in cloud-native environments due to its lightweight virtualization and efficient resource isolation capabilities. 
By encapsulating each service function within its own container, multiple \SF s can be flexibly co-located on a single host, thereby improving resource utilization and deployment agility~\cite{service-oriented}. 
Such co-location is enabled by Linux namespaces and control groups (or
cgroups)~\cite{klonet}, which provide strong isolation between containers and allow for the independent management of resources, such as CPU and memory.
In addition to CPU and memory resources, various technologies have emerged to enable efficient GPU sharing and management among containers~\cite{beware}.

\textbf{\SF\ Communication:}
At the node level, co-located \SF s can flexibly establish arbitrary Layer 2 communication networks via technologies such as veth pairs and Open vSwitch (OVS), enabling intra-host data transmission and reducing inter-node network consumption~\cite{klonet}. 
At the domain level, \SF s deployed in different data centers within a domain can establish communication links through tunneling technologies such as VxLAN, allowing traffic to remain within the domain and thus preserving valuable inter-domain network resources.
Furthermore, per-flow traffic engineering (TE) tunnels enable logical links to be precisely mapped onto arbitrary forwarding paths in the CPN network, allowing for flexible and fine-grained network control~\cite{megate}.

Nevertheless, there remains a lack of theoretical guidance on how to exploit these new capabilities for service mapping in CPN. 
Therefore, it is imperative to develop new theoretical models and solution frameworks that can systematically leverage advanced co-location mechanisms to optimize service satisfaction and resource utilization. 
\section{
Model Formulation and Problem Study}
\label{secs:model}
This section details the system model and mathematical formulation of the SEM problem. We then establish its theoretical intractability, analyze its practical challenges, and provide key insights for designing effective solutions.

\subsection{System Model}
\subsubsection{Computing Power Network (CPN)}
The underlying CPN topology is formally defined as an undirected graph $G^s = (N^s, L^s)$~\cite{monte}, where $N^s$ represents the set of computing nodes (CNs) and $L^s$ denotes the set of their interconnecting network links (NLs).
In this paper, the superscript $s$ corresponds to the CPN topology.
Each CN $m^s \in N^s$ is characterized by its available computing capacity, denoted as $C(m^s)$.
Similarly, each NL connecting a pair of CNs, denoted by $l_{mn}^s \in L^s$ ($\forall m^s,n^s \in N^s$), is associated with a bandwidth capacity $B(l_{mn}^s)$.

\subsubsection{Service Entity (SE)}
Each service request $i$ corresponds to an SE, modeled as an undirected graph $G^v_i = (N^v_i, L^v_i)$~\cite{microservicevne}.
In this representation, $N^v_i$ represents the set of service functions (SFs), \nolinebreak and $L^v_i$ denotes the set of logical links (LLs) connecting these \SF s.
For clarity of exposition, we represent the computing requirement of each \SF\ $u^v \in N^v_i$ by a scalar demand $c(u^v)$; each \LL\ $l^v_{uw}\in L_i^v$, connecting a pair of \SF s $u^v,w^v\in N_i^v$, is associated with a bandwidth requirement of $b(l^v_{uw})$.
This scalar abstraction can be extended to heterogeneous multi-dimensional node resources by replacing $c(u^v)$ with a demand vector and applying per-dimension capacity constraints.

\subsection{
Problem Description and Formulation
}
We address the SEM problem in an online setting where service requests arrive sequentially.  
The main challenge is to map each incoming SE onto the CPN in real time under current resource availability, where admitted SEs occupy computing and network resources for their lifetimes. 
Given a CPN topology $G^s = (N^s, L^s)$ and an SE request $G^v_i = (N^v_i, L^v_i)$, the SEM problem is abstracted into two phases: the service function mapping (SFnM) and the logical link mapping (LLnM).

\subsubsection{Service Function Mapping (SFnM)}
This phase maps each SF of $G^v_i$ to an appropriate CN in $G^s$. A single CN can host multiple SFs (SF co-location), and each mapped SF consumes specified computing resources. We introduce a binary variable $x^{u^v}_{m^s} \in \{0, 1\}$ to indicate if SF $u^v \in N^v_i$ is mapped to a CN $m^s\in N^s$. The SFnM phase is subject to the following constraints:

$\bullet$ \emph{SF-to-CN mapping:} Each SF must be mapped to exactly one CN (Equation~\eqref{f1}).
\begin{equation}
\forall u^v \in N_i^v:\quad \sum_{m^s \in N^s}\, x^{u^v}_{m^s} = 1.
	\label{f1}
\end{equation}

$\bullet$ \emph{CN-to-SF hosting:} A CN can host zero or more SFs (Inequality~\eqref{f2}).
\begin{equation}
	\forall m^s \in N^s:\quad \sum_{u^v \in N_i^v}\, x^{u^v}_{m^s} \geq  0.
	\label{f2}
\end{equation}

$\bullet$ \emph{CN resource capacity:} The total computing requirements of hosted SFs must not exceed the CN’s available capacity (Inequality~\eqref{f3}). 
\begin{equation}
	\forall m^s \in N^s: \quad
\sum_{u^v \in N_i^v} x_{m^s}^{u^v} \cdot c(u^v) \leq C(m^s).
\label{f3}
\end{equation}

\subsubsection{Logical Link Mapping (LLnM)}
The SF-to-CN mapping is fixed in this phase: an \LL\ whose endpoint SFs are mapped to the same CN is internally hosted, whereas a \textit{Cut}-\LL\ (endpoints mapped to different CNs) must be routed through the network~\cite{megate}.
Although $G^s$ is undirected, we use a canonical ordering of CN pairs for indexing and define $K \triangleq \{(m^s,n^s)\in N^s\times N^s : m^s < n^s\}$.
For each $k\in K$, we pre-compute a set of loop-free tunnels $P^k$, and let $P^s=\bigcup_{k\in K}P^k$.
Parameter $y^p_{mn}\in\{0,1\}$ indicates whether tunnel $p\in P^s$ traverses \NL\ $l^s_{mn}$.
Since $G^v_i$ is undirected, for each \textit{Cut}-\LL\ with endpoints $\{u^v, w^v\}$, we let $(u^v, w^v)$ be the ordered pair such that $u^v$ is mapped to $m^s$ and $w^v$ is mapped to $n^s$ with $m^s < n^s$.
We then refer to this unique pair as $k=(m^s, n^s) \in K$.
For internally hosted \LL s, all $f$-variables are set to zero.
We impose the following constraints: 

$\bullet$ \emph{Flow indivisibility:} 
Each \textit{Cut}-\LL\ is mapped to at most one tunnel (Inequality~\eqref{f4}); internally hosted \LL s select no tunnel.
\vspace{-0.5em}
\begin{equation}
	\forall l^v_{uw} \in L_i^v:\quad 
\sum_{k \in K}
\sum_{p \in P^k} f^{u^vw^v}_{p,k} 
 \leq 1 .
	\label{f4}
\end{equation}

$\bullet$ \emph{Tunnel selection coherence:} A tunnel can only be selected for an LL if its endpoint SFs are mapped to the tunnel’s respective CNs (Inequality~\eqref{f5}).
\begin{align}
\forall l^v_{uw}\in L^v_i,\ \forall k=(m^s,n^s)\in K,\ \forall p\in P^k:\nonumber\\
f^{u^vw^v}_{p,k} &\le x^{u^v}_{m^s}, \nonumber\\
f^{u^vw^v}_{p,k} &\le x^{w^v}_{n^s}.
\label{f5}
\end{align}

$\bullet$ \emph{Network Link (NL) resource capacity:} The total bandwidth requirements of mapped LLs must not exceed the capacity of any traversed NL (Inequality~\eqref{f6}).
\begin{align}
	&\forall l_{mn}^s \in L^s : \nonumber \\
\sum_{l_{uw}^v \in L_i^v} \sum_{k \in K} \sum_{p \in P^k} &  f^{u^vw^v}_{p,k} \cdot y^p_{mn} \cdot b(l^v_{uw})
\leq B(l^s_{mn}).
\label{f6}
\end{align}

\textbf{Problem Formulation.} 
Upon receiving a new service request, the CPN orchestrator solves the SEM problem to optimize  SFnM and LLnM decisions, aiming to maximize \emph{long-term system profitability}. This Profit metric balances service acceptance, revenue, and resource costs. The SEM problem is modeled as ($\mathbb{P}1$):
\begin{align}
	&(\mathbb{P}1): 
	\max_{\mathbf{x^t,f^t}} 
        \label{f15} \\ \nonumber
	 &\underbrace{(p_{ac}(t))^\varkappa}_{\substack{\text{Weighted}\\\text{Acceptance Ratio}}} \cdot \underbrace{\biggl( \underbrace{\sum_{G^v_i \in \Upsilon^a(t)} \mathcal{R}(G_i^v)}_{\text{Revenue}} - \underbrace{\omega \cdot \sum_{G^v_i \in \Upsilon^a(t)} \mathcal{C}(G_i^v)}_{\text{Weighted Cost}} \biggr)}_{\text{Net Revenue (Profitability Core)}}
	  \\ \nonumber
	&s.t. \quad  (\ref{f1}) - (\ref{f6}), \quad \quad \forall t.
\end{align}

Here, $\mathbf{x^t}$ and $\mathbf{f^t}$ represent the SF and LL mapping decisions at time $t$. The objective (\ref{f15}) balances the weighted service acceptance ratio with net revenue. 
The coefficient $\varkappa \geq 1$ emphasizes the importance of the acceptance ratio, directly enhancing service satisfaction. The factor 
$0< \omega < 1$ allows trading network resource consumption for higher computing resource utilization, aligning with CPN’s goal of maximizing distributed computing resource use.  
The core components of the objective function are: 

\textit{Acceptance ratio $p_{ac}(t)$}: The proportion of successfully mapped SEs up to time $t$.
\begin{equation}
	p_{ac}(t) = \frac{|\Upsilon^a(t)|}{|\Upsilon(t)|},
	\label{f9}
\end{equation}
where $\Upsilon(t)$ is the set of requested SEs and $\Upsilon^a(t)$ is the subset of successfully mapped ones.

\textit{Revenue $\mathcal{R}(G_i^v)$}: Revenue from an accepted $G_i^v$, derived from its SF and LL resource demands. Rejected requests yield zero revenue. 
\begin{align}
	\mathcal{R}(G_i^v) = w_c \sum_{u^v \in N^v_i} c(u^v) 
	+ w_b \sum_{l^v_{uw} \in L_i^v} b(l^v_{uw}).
	\label{f7}
\end{align}

\textit{Cost $\mathcal{C}(G_i^v)$}: Resources consumed to host $G_i^v$ comprising computing node cost $\mathcal{C}_n(G_i^v)$ and network link cost $\mathcal{C}_l(G_i^v)$. 
\begin{align}
&\mathcal{C}(G_i^v) =
\pi_c \, \mathcal{C}_n(G_i^v) + \pi_b \, \mathcal{C}_l(G_i^v) \nonumber \\
&= \pi_c\sum_{u^v \in N^v_i} c(u^v) \nonumber \\
&+ \pi_b\sum_{l^v_{uw} \in L_i^v} \sum_{k \in K} \sum_{p \in P^k} \sum_{l^s_{mn} \in L^s}
f^{u^vw^v}_{p,k} \cdot y^p_{mn} \cdot b(l^v_{uw}).
\label{f8}
\end{align}
Computing cost equals SF resource demands. 
Network cost varies based on the number of NLs used for LL mapping.

Here, $w_c$ and $w_b$ are the revenue weights for computing and bandwidth demands, respectively, and $\pi_c$ and $\pi_b$ balance the \SF\ and \LL\ cost contributions.

\subsection{Theoretical Analysis of SEM}
Since each admission decision affects the future system state, ($\mathbb{P}1$) couples decisions across time.
Solving such an online integer linear programming (ILP) problem to optimality in real time is generally computationally intractable.
For this reason, a widely adopted approach is to make independent decisions for each service request based on the current system state~\cite{divide,gastp}.
Thus, we adopt this time-local greedy strategy that simplifies ($\mathbb{P}1$) at the expense of global optimality.
For a newly arrived request at time $\tau$, we freeze the system state at $\tau$ and transform ($\mathbb{P}1$) into a per-request snapshot optimization problem ($\mathbb{P}2$), formulated as follows\footnote{For notational convenience, we use subscript $i$ to denote the service request arriving at time $\tau$.}:
\begin{subequations}
	\begin{align}
	&(\mathbb{P}2):
	\max_{\mathbf{x,f}}
	(p_{ac}(\tau))^{\varkappa} \cdot \Big(\mathcal{R}(G^v_i) - \omega \cdot \mathcal{C}(G^v_i, \mathbf{x,f})\Big)  \\
	&s.t. \quad  (\ref{f1}) - (\ref{f6}).
	\end{align}
	\label{f16}
\end{subequations}

($\mathbb{P}2$) is time-local and single-request: since $p_{ac}(\tau)$ is fixed at time $\tau$, the objective depends only on the current request and state, avoiding explicit reasoning about future arrivals and making ($\mathbb{P}2$) more amenable to practical algorithms than the full online problem.
However, this simplification does not make the problem tractable:  ($\mathbb{P}2$) remains NP-hard (see Theorem~\ref{T1}), and the full online formulation \ ($\mathbb{P}1$) is NP-hard as well (see Corollary~\ref{c1}).


\begin{statement}[($\mathbb{P}2$)'s equivalent version ($\mathbb{P}2\text{-a}$)]
	At time $\tau$, as implied by~(\ref{f9}), (\ref{f7}), and~(\ref{f8}), the acceptance ratio, request revenue, and SF mapping cost are constants. Consequently, the objective in \textup{($\mathbb{P}2$)} reduces to minimizing the bandwidth cost incurred in LLnM, leading to the following simplified yet equivalent formulation, which serves as the basis for proving the intractability of \textup{($\mathbb{P}2$)} and the more challenging \textup{($\mathbb{P}1$)}:
    \begin{subequations}
	\begin{align}
	&(\mathbb{P}2\textup{-a}):
	\min_{\mathbf{x,f}}
	\mathcal{C}_l(G^v_i, \mathbf{x,f})   \\
	&s.t. \quad  (\ref{f1}) - (\ref{f6}).
	\end{align}
	\label{f17}
\end{subequations}
\end{statement}

\begin{statement}[($\mathbb{P}2$)'s decision version, ($\mathbb{P}2$\text{-D})]
	Given the CPN topology $G^s = (N^s, L^s)$, the request's SE topology $G^v_i = (N^v_i, L^v_i)$ at time $\tau$, and a positive number $Q$, the problem is to ask whether there exists a mapping solution that keeps the LLnM cost (\ref{f17}a) no greater than $Q$ while satisfying the constraints (\ref{f17}b).
\end{statement}


\begin{statement}[Graph bisection problem (GBP)'s decision version, GBP-D]
	Given an undirected graph $G(V, E)$ and an integer $K$ within the range $0 < K \leq |E|$, the problem involves determining whether $V$ can be divided into two disjoint, \nolinebreak equally sized vertex subsets $V_1$ and $V_2$ (i.e., $||V_1|-|V_2|| \leq 1$), such that their union spans all vertices ($V_1 \cup V_2 = V$) and their intersection is empty ($V_1 \cap V_2 = \emptyset$), while ensuring the number of edges connecting $V_1$ and $V_2$, denoted as $|Cut(V_1, V_2)|$, does not exceed $K$.
\end{statement}

\begin{theorem}
	\textup{($\mathbb{P}2$)} is NP-hard. 
	\label{T1}
\end{theorem}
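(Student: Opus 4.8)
The plan is to establish NP-hardness of $(\mathbb{P}2)$ via a polynomial-time reduction from the graph bisection problem, whose decision version GBP-D is a classical NP-complete problem. Since maximizing the objective of $(\mathbb{P}2)$ has already been shown to reduce to minimizing the LLnM cost in $(\mathbb{P}2)$-a, it suffices to prove that the decision version $(\mathbb{P}2)$-D is NP-hard; the NP-hardness of the optimization problem $(\mathbb{P}2)$---and hence the intractability of the more general online problem $(\mathbb{P}1)$, of which $(\mathbb{P}2)$ is a simplified single-request special case---then follows immediately.

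First I would specify the instance transformation. Given an arbitrary GBP-D instance $(G(V,E),K)$, I construct a $(\mathbb{P}2)$-D instance as follows. The CPN is taken to consist of exactly two computing nodes $m_1^s, m_2^s$ joined by a single network link, with a single pre-computed loop-free tunnel between them. The incoming SE is a copy of the input graph: every vertex $v \in V$ becomes an SF with unit computing demand $c(\cdot)=1$, and every edge $e \in E$ becomes an LL with unit bandwidth demand $b(\cdot)=1$. Each CN is given capacity $C(m_j^s)=\lceil |V|/2 \rceil$, and the single NL is given a large capacity $B \geq |E|$ so that the bandwidth constraint \eqref{f6} is never the binding obstacle. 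Finally I set the decision threshold $Q=K$. This construction is clearly computable in time polynomial in $|V|+|E|$.

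Next I would argue the equivalence of the ``yes'' instances. Under this construction, constraint \eqref{f1} forces each SF onto exactly one of the two CNs, so any feasible SFnM induces a bipartition $(V_1,V_2)$ of $V$. Because both CNs have capacity $\lceil |V|/2 \rceil$ and all $|V|$ unit-demand SFs must be placed, the capacity constraint \eqref{f3} forces $||V_1|-|V_2|| \leq 1$ for both even and odd $|V|$, i.e., a valid bisection. An LL is a Cut-LL precisely when its two endpoint SFs lie on different CNs, in which case it must be routed on the unique inter-node tunnel and consumes one unit of bandwidth on the single NL; co-located LLs are routed internally at zero cost. Hence, by \eqref{f8}, the LLnM cost $\mathcal{C}_l$ equals exactly the number of cut edges $|Cut(V_1,V_2)|$. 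Therefore a feasible mapping of cost at most $Q=K$ exists if and only if $G$ admits a bisection with at most $K$ cut edges, which is exactly the equivalence required.

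The main obstacle I anticipate is not the combinatorial core of the reduction but the careful enforcement of the balanced partition through the resource-capacity constraint: I must verify that the single choice $C(m_j^s)=\lceil |V|/2\rceil$ yields $||V_1|-|V_2|| \leq 1$ tightly in both the even and odd cases, and confirm that the chosen bandwidth $B \geq |E|$ keeps every candidate bisection feasible, so that it is the solution \emph{cost}---and not feasibility---that the reduction controls. Once these details are settled, the polynomial reduction from the NP-complete GBP-D establishes that $(\mathbb{P}2)$-D, and therefore $(\mathbb{P}2)$, is NP-hard.
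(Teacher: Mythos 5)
Your proof is correct and follows essentially the same approach as the paper: a polynomial-time reduction from the graph bisection decision problem GBP-D to ($\mathbb{P}2$)-D, encoding the input graph as a unit-demand SE and using CN capacities to force a balanced bipartition so that the LLnM cost counts exactly the cut edges. The differences are only in construction details---you use two CNs, one link, finite bandwidth $B \geq |E|$, and threshold $Q=K$, whereas the paper uses three CNs (one with zero capacity) and two links with infinite bandwidth and $Q=2K$---and your two-node variant is, if anything, the cleaner instantiation of the same idea.
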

\begin{proof}
    To prove it, we reduce GBP-D to ($\mathbb{P}2$)-D.
Let $\{G(V, E), V_1, V_2, K\}$ be an arbitrary instance of GBP-D.
On this basis, we construct an instance of ($\mathbb{P}2$)-D with a CPN topology $G^s = (N^s, L^s)$, a request's SE topology $G^v_i = (N^v_i, L^v_i)$, and a threshold $Q > 0$.
Then, we prove that a feasible ($\mathbb{P}2$)-D solution exists if and only if there exists a valid graph bisection for $\{G(V, E), V_1, V_2, K\}$~\cite{novel}.

First, we design a polynomial-time transformation, in which we construct the SE topology $G^v_i = (N^v_i, L^v_i)$ and the CPN topology $G^s = (N^s, L^s)$ according to the given GBP-D instance.
For the SE topology, let $N^v_i = V$ and $L^v_i = E$, with unit resource requirements for each \SF \ and \LL.
The CPN topology comprises three \CN s: $N^s = \{n_1, n_2, n_3\}$ and two \NL s: \ $L^s = \{l^s_{n_1 n_2}, l^s_{n_2 n_3}\}$, with $n_1$ containing $|V_1|$ resource units, $n_2$ containing $|V_2|$ resource units, and $n_3$ containing zero available resources, while all \NL s maintain infinite bandwidth capacity.
The threshold $Q$ is set as $Q = 2K$.
The above construction can be easily verified as polynomial-time computable.

We now establish the bi-directional reduction.
If there exists a balanced bipartition $(V_1, V_2)$ where $||V_1|-|V_2|| \leq 1$ and $|Cut(V_1, V_2)| \leq K$, then a feasible solution for ($\mathbb{P}2$)-D is constructed by the following mapping:
map all \SF s in $V_1$ to $n_1$, where $|V_1|$ equals the capacity of $n_1$; map all \SF s in $V_2$ to $n_2$, where $|V_2|$ equals the capacity of $n_2$.
Consequently, the \LL s corresponding to $|Cut(V_1, V_2)|$ have to be carried through the \NL s $l^s_{n_1 n_2}$ and $l^s_{n_2 n_3}$.
The LLnM cost of the above solution is $\tilde{Q} = 2\tilde{K} \leq 2K = Q$.

Conversely, if there exists a feasible transformed solution to ($\mathbb{P}2$)-D with cost $\tilde{Q} \leq Q$, then since $n_3$ has zero capacity, \nolinebreak all \SF s must map to either $n_1$ or $n_2$.
Let $V_1$ and $V_2$ be the sets of nodes mapped to $n_1$ and $n_2$, respectively.
Here, the balanced condition $||V_1|-|V_2|| \leq 1$ is automatically satisfied by the capacity constraints of \CN s.
Moreover, since the LLnM process involves two \NL s, $\tilde{K} = \tilde{Q}/2 \leq Q/2 = K$.
Thus, a GBP-D problem can be reduced to a ($\mathbb{P}2$)-D problem.

Since the GBP-D problem is NP-hard, the ($\mathbb{P}2$)-D problem is also NP-hard.
Note that ($\mathbb{P}2$)-D is the original ($\mathbb{P}2$)'s decision version, which is believed to be less intractable than the latter.
Eventually, we prove that ($\mathbb{P}2$) is NP-hard.
\end{proof}
\begin{corollary}
	\textup{($\mathbb{P}1$)} is NP-hard.
    \label{c1}
\end{corollary}
\begin{proof}
    Since ($\mathbb{P}2$) can be reduced to ($\mathbb{P}1$) in polynomial time, and ($\mathbb{P}2$) is NP-hard by Theorem \ref{T1}, ($\mathbb{P}1$) is NP-hard.
\end{proof}


\subsection{Practical Challenges of SEM}
While Theorem \ref{T1} establishes the NP-hardness of ($\mathbb{P}1$), its practical resolution is further complicated by the new characteristics and associated challenges introduced by explicitly optimizing \SF \ co-location in the SEM problem.
We illustrate these complexities with the following example.

Consider the scenario in Fig.~\ref{fig: case}, which involves two SEs representing incoming requests and examines how mapping decisions for the first request affect the feasibility of mapping the second. 
\SF s are labeled with lowercase letters, with their computing requirements shown inside each node. 
Each \LL\ requires 1 unit of bandwidth. 
The CPN environment contains four \CN s (labeled by capital letters), with node and link resource capacities indicated by the adjacent numbers.

Since co-located \SF s jointly affect each \CN's remaining resources and the total bandwidth requirement of \LL s that must be mapped onto tunnels (\textit{Cut}-\LL s), node-level decisions alone cannot capture the global impact of mapping. 
To characterize the complexity and intertwined global effects of such decisions, we introduce three high-level observation perspectives that explicitly reflect the collective influence of co-located \SF s, as shown in Fig.~\ref{fig: case}:

\textit{(a)} \textit{Co-location group, which describes how \SF s are grouped together based on their aggregate computing requirements within an SE (\emph{e.g.}, 8:2 and 7:3 in the example);}

\textit{(b)} \textit{Co-location group assignment, which specifies the particular \SF \ members for a given group (\emph{e.g.}, \{a,b,d\}\&\{c\} for 8:2 and \{a,b,c\}\&\{d\} for 7:3); and}

\textit{(c)} \textit{Co-location group mapping, which determines how each co-location group is mapped onto \CN s (\emph{e.g.}, E:8 \& F:2 for 8:2 and E:7 \& G:3 for 7:3).}

Co-location group, mapping, and assignment choices have different impacts on the computing and network resources of the CPN environment. 
The choice of co-location group (\emph{e.g., 7:3 vs. 8:2}) affects computational utilization in CPN; for example, a 7:3 grouping inevitably leaves 1 unit of computing fragment at \CN\ \emph{E}, regardless of group assignment or mapping. 
For the same group assignment (\emph{e.g., \{b,c,d\}~\&~\{a\}}), different group mappings (\emph{E:8~\&~F:2 vs. E:8~\&~G:2}) result in different \NL\ usages; mapping to \emph{G} instead of \emph{F} adds \emph{F--G} link usage, making it a bottleneck. 
With the same group mapping (\emph{e.g., E:8 \& F:2}), different group assignments (\emph{\{a,b,d\}~\&~\{c\} vs. \{b,c,d\}~\&~\{a\}}) change the bandwidth requirements across all \textit{Cut}-\LL s; for instance, \emph{\{a,b,d\}~\&~\{c\}} increases the requirement from 2 to 3 and causes the \emph{E--F} link to become a bottleneck. 
All these scenarios may block subsequent requests. 
These observations highlight the practical challenges in solving the SEM problem and offer valuable design insights.

\begin{figure*}[!tp]
    \centering
    \includegraphics[width=0.73\linewidth]{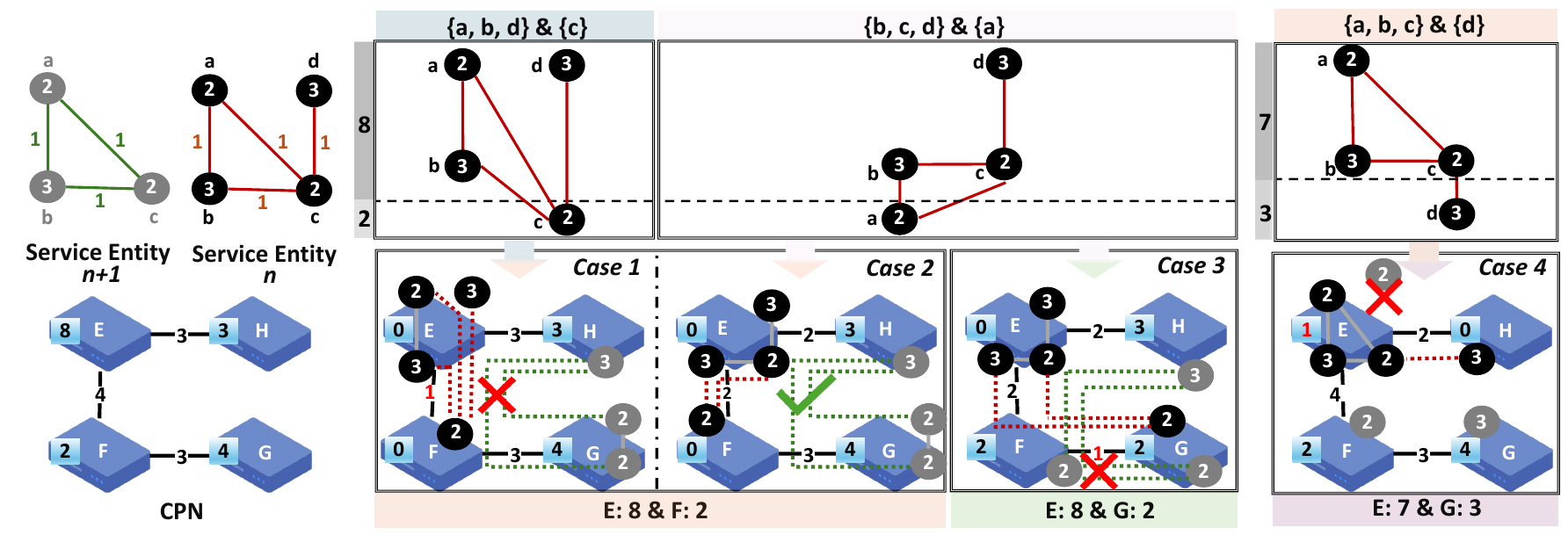}
    \caption{Examples illustrating the additional complexities and challenges of the SEM problem introduced by SF co-location. The mapping solutions that lead to the rejection of the subsequent request are marked with red crosses, and the resource bottleneck locations causing the issue are indicated with red numbers.}
    \label{fig: case}
\end{figure*}

     
\emph{Challenge 1:} 
Co-location induces strong coupling among SFnM variables with cascading effects: it can pack \SF\ demands into supply-matching patterns, eliminate \LL s, and shorten \textit{Cut}-\LL\ paths, making node-wise optimization inadequate (e.g., Cases 4, 1, and 3 as counterexamples of ignoring such coupling, respectively).

\emph{\textbf{Insight 1: Reformulate the problem via graph partitioning~\cite{metis}.}}
Viewing co-location groups as partition weights makes the couplings explicit: the weights determine resource matching, the partitioning optimizes group assignment by minimizing cut bandwidth, and the subgraph placement controls mapping paths (Section~\ref{subsecs:analysis-A}).

\emph{Challenge 2:} Co-location expands the solution space explosively (from $P(|N^s|,|N^v|)=\frac{|N^s|!}{(|N^s|-|N^v|)!}$ to $|N^v|^{|N^s|}$ when $|N^v| \leq |N^s|$), making global optimality harder to achieve and local search prone to local optima.

\emph{\textbf{Insight 2: Enable adaptive search via explicit coupling awareness.}} The explicit coupling from \textbf{\textit{Insight 1}} enables joint search over groupings and mappings; combined with global metrics and prioritizing assignments that minimize \textit{Cut}-\LL\ bandwidth, such search avoids premature convergence and effectively explores the solution space (Section~\ref{subsecs:analysis-B}).

\emph{Challenge 3:} The enlarged solution space and strong coupling hinder effective global evaluation; local metrics~\cite{divide} often lead to suboptimal results (e.g., Case 4 minimizes bandwidth cost of \textit{Cut}-\LL\ mapping but is not globally optimal).

\emph{\textbf{Insight 3: Assess global impact using resource fragmentation.}}
Resource fragmentation caused by bottlenecks (Cases 1, 3, and 4) can serve as a proxy for the global effect of each decision. Quantifying such fragmentation guides the search toward better solutions (Section~\ref{subsecs:analysis-C}).

\section{
ABS: Framework Design and Implementation
}
\label{sec: framework}

Based on the identified insights, this section reformulates the problem, designs the search framework, and develops a global evaluation mechanism.

\subsection{Reformulation for Coupling-Aware Decision-Making}
\label{subsecs:analysis-A}


In addition to the coupling among SFnM variables, a major challenge in SEM arises from the deep interdependence between SFnM and LLnM. 
On one hand, as shown in Fig.~\ref{fig: case}, co-location decisions shape the bandwidth needs and tunnel options for \textit{Cut}-LLs, directly affecting LLnM bandwidth costs as in (\ref{f8}).
On the other hand, suboptimal \textit{Cut}-LL mappings in LLnM can fragment \CN\ resources: network resource depletion or uneven allocation may leave computing resources scattered across \CN s without sufficient network capacity to interconnect them for future SEs.
This tight coupling complicates joint optimization across both processes.

Therefore, we adopt a local greedy approach based on the ($\mathbb{P}2\text{-a}$) problem, decomposing SEM into a more tractable two-phase structure.
In phase one, we solve the minimum-cut \SF\ mapping (mcuSm) for SFnM, identifying co-location groups, assignments, and mappings that minimize the bandwidth requirements of \textit{Cut}-LLs.
In phase two, we solve the minimum-cost \LL\ mapping (mcoLm) for LLnM, using the SFnM solution to minimize the additional bandwidth costs from mapping these \textit{Cut}-LLs.
While this dual-phase method cannot guarantee global optimality due to the coupling between SFnM and LLnM, it provides a practical approach to decouple the problem and focus on the complex interactions among SFnM decision variables.


\subsubsection{The mcuSm optimization}
This optimization problem is formulated as:
\begin{subequations}
	\begin{align}
		\mathbb{P}&3: \min_{\mathbf{x}} \Phi(\mathbf x) \triangleq
		\sum_{l^v_{uw} \in L^v_i} b(l^v_{uw})
		\sum_{m^s \in N^s} x^{u^v}_{m^s} (1 - x^{w^v}_{m^s}) \\
		&s.t. \quad (\ref{f1}) - (\ref{f3}).
		\end{align}
\end{subequations}

Although ($\mathbb{P}3$) is NP-hard (Corollary \ref{coro3} in Appendix~\ref{apdxA}), this formulation provides a foundation to systematically achieve coupling-aware decision making.
Guided by \emph{\textbf{Insight 1}}, we introduce a tailored graph-partitioning formulation~\cite{metis} that yields an equivalent reformulation of ($\mathbb{P}3$), thereby transforming SFnM from independent per-\SF\ placement decisions to explicit co-location-aware decision making.

\begin{definition}[Proportional Weight-Constrained \textit{k}-way Graph Partitioning Problem, PW-kGPP]
	Consider an undirected graph $G = (V, E)$, where each vertex $i \in V$ has a vertex weight $w_i > 0$, each edge $(i,j) \in E$ has an edge weight $c_{ij} > 0$, \nolinebreak and a normalized weight proportion set $PS = \{r_1, r_2, ..., r_k\}$ satisfies $\sum^k_{p=1} r_p = 1$.
	The problem requires partitioning $V$ into $k$ disjoint subsets $V_1, V_2, ..., V_{k}$, such that the sizes of the subsets correspond to the given proportion weights, while minimizing the total weight of the cut edges (edges connecting different subsets). The formal formulation is provided in Appendix~\ref{apdxc}.
\end{definition}
According to Definition \ref{D3}, we reformulate the ($\mathbb{P}3$) problem and propose the problem ($\mathbb{P}4$):
\begin{subequations}
	\begin{align}
		\label{f19a}
		\mathbb{P}&4: \min_{\boldsymbol{\rho},\mathbf{x},\mathbf{z}}
        \Phi(\mathbf x) \\
		\label{f19b}
		&s.t. \quad (\ref{f1}) - (\ref{f3}), \\
		\label{f19c}
		& \quad \forall m^s \in N^s: \nonumber \\
		& \quad (1-\theta)\rho_{m^s}
		\leq \frac{\sum_{u^v \in N^v_i} c(u^v) x^{u^v}_{m^s}}{\sum_{u^v \in N^v_i} c(u^v)}
		\leq (1+\theta)\rho_{m^s}, \\
		\label{f19d}
		& \quad \forall u^v \in N_i^v, \forall m^s \in N^s: \quad
		 x^{u^v}_{m^s} \le z_{m^s}, \\
        \label{f19d2}
        & \quad \forall m^s \in N^s: \quad \rho_{m^s} \le z_{m^s}, \\
        \label{f19d3}
        & \quad \forall m^s \in N^s: \quad z_{m^s} \in \{0,1\}, \\
        \label{f19f}
        & \quad \forall m^s \in N^s: \rho_{m^s} \ge 0, \\
		\label{f19e}  
		& \quad \sum_{m^s \in N^s} \rho_{m^s} = 1.
	\end{align}
\end{subequations}

The optimization objective of ($\mathbb{P}4$) is the same as that of ($\mathbb{P}3$), and the binary decision variable $x^{u^v}_{m^s}$ characterizes the standard SFnM process.
In addition, we introduce a continuous variable $\rho_{m^s}$ to denote the fraction of the total \SF\ resource demand accounted for by the \SF s hosted on \CN\ $m^s$.
We also introduce an auxiliary binary variable $z_{m^s}$ indicating whether \CN\ $m^s$ hosts at least one \SF.
The parameter $\theta$ is a configurable tolerance coefficient.
Furthermore, a ``subpart'' of an SE is defined as the combination of \SF s and the \LL s between them that are assigned together to a single \CN.

Similar to (\ref{f18b}) (in Appendix~\ref{apdxc}), constraint (\ref{f19c}) restricts which \SF s can be grouped on each \CN\ $m^s$ to form a subpart, while bounding the deviation between the actual resource proportion assigned to $m^s$ and the specified $\rho_{m^s}$ within tolerance $\theta$.
Constraints (\ref{f19d})--(\ref{f19d3}) couple $\boldsymbol{\rho}$ and $\mathbf{x}$ via $z_{m^s}$: (\ref{f19d}) allows an \SF\ to be mapped to \CN\ $m^s$ only when $z_{m^s}=1$, whereas (\ref{f19d2}) and (\ref{f19d3}) ensure that $\rho_{m^s}=0$ if and only if $z_{m^s}=0$.
Together, these constraints enforce that if $\rho_{m^s}=0$, then $x^{u^v}_{m^s}=0$ for all $u^v$, ensuring the intended logical consistency between proportion variables and mapping decisions.
Finally, (\ref{f19f}) and (\ref{f19e}) ensure that $\boldsymbol{\rho}$ is a valid proportion vector.

According to the definitions, and as illustrated in the examples in Fig.~\ref{fig: case}, it can be observed that under the given constraints, $\rho_{m^s}$ jointly determines the co-location group and its mapping, while $x^{u^v}_{m^s}$ determines the assignment of the co-location group.
The auxiliary variable $\mathbf{z}$ only serves to express the logical relationship between $\boldsymbol{\rho}$ and $\mathbf{x}$ and does not change the objective of the original SFnM problem.
Therefore, ($\mathbb{P}4$) provides a structured formulation that captures the intricate couplings between SFnM variables and explicitly optimizes \SF\ co-location. 
As shown in Appendix~\ref{apdxA} (Theorem~\ref{t2}), ($\mathbb{P}3$) and ($\mathbb{P}4$) are equivalent.

Unfortunately, both ($\mathbb{P}3$) and ($\mathbb{P}4$) are still NP-hard problems. However, we observe that if a set of feasible proportions $\boldsymbol{\rho}$ is given, ($\mathbb{P}4$) reduces to the PW-kGPP problem. 
In this reduction, non-zero values of $\boldsymbol{\rho}$ in ($\mathbb{P}4$) correspond to the weight proportion set $PS$ in PW-kGPP, with the number of non-zero elements equal to $k$, and other correspondences following directly.

The NP-hardness of PW-kGPP is established in Appendix~\ref{apdxA} (Theorem~\ref{t3}), which directly implies that ($\mathbb{P}4$) is NP-hard (Corollary~\ref{coro2}) and hence ($\mathbb{P}3$) is NP-hard (Corollary~\ref{coro3}).


\subsubsection{The mcoLm optimization}
After the mcuSm optimization, the set of \textit{Cut}-\LL s, denoted as $CL_i^v$, is determined. 
The next step is to map each $l^v_{uw} \in CL_i^v$ onto a tunnel, with the objective of minimizing the total bandwidth cost. 
This mapping subproblem can be reduced to an instance of the Integer Multicommodity Flow Problem (IMCF), which is well known to be NP-hard~\cite{mcf}.

\subsection{
Bilevel optimization framework for adaptive searching
}
\label{subsecs:analysis-B}
Inspired by \emph{\textbf{Insight 2}}, we reformulate ($\mathbb{P}4$) as a bilevel optimization problem ($\mathbb{P}4\text{-BOP}$), where the upper level searches over the proportion weight vector (PWV) $\boldsymbol{\rho}$ and the lower level solves the induced combinatorial mapping problem.
For any fixed $\boldsymbol{\rho}$, the lower level specifies a best-response mapping decision by solving the induced subproblem~\cite{bilevel}.
\begin{subequations}
\begin{IEEEeqnarray}{ll}
\label{p4bop}
&(\mathbb{P}4\text{-BOP}):\\ \nonumber
& \text{Upper-level:}\ \min_{\boldsymbol{\rho}}\ \Phi(\mathbf{x}^*(\boldsymbol{\rho})) \\
& \qquad\qquad\ \text{s.t.}\ (\ref{f19f})-(\ref{f19e}) \\
& \text{Lower-level:}\ (\mathbf{x}^*(\boldsymbol{\rho}),\mathbf{z}^*(\boldsymbol{\rho})) \in 
\arg\min_{\mathbf{x},\mathbf{z}}\ \Phi(\mathbf{x}) \\
& \qquad\qquad\ \text{s.t.}\ (\ref{f1})-(\ref{f3}),(\ref{f19c}) -(\ref{f19d3}).
\end{IEEEeqnarray}
\end{subequations}

As shown in Appendix~\ref{apdxA} (Theorem~\ref{t4}), ($\mathbb{P}4$) and ($\mathbb{P}4\text{-BOP}$) are equivalent in the sense that they attain the same globally optimal objective value, and their optimal solutions coincide when projected onto the original decision variables.
Thus, we can use this bilevel form as a decomposition interface for the ensuing nested solution procedure.

Taking a closer look, the aforementioned dual-phase structure suggests solving the mcuSm optimization first to determine the \textit{Cut}-\LL\ set, and then solving the mcoLm optimization (i.e., the IMCF problem) conditioned on the selected \textit{Cut}-\LL s.
Moreover, the bilevel structure ($\mathbb{P}4\text{-BOP}$) induces a two-stage routine for the mcuSm optimization: we search for the upper-level variable $\boldsymbol{\rho}$, and for each fixed $\boldsymbol{\rho}$, the induced lower-level subproblem can be reduced to a PW-kGPP instance.
Through this series of decomposition and reformulation, we obtain a practical search framework for $(\mathbb{P}2)$: the search can be reduced to navigating the low-dimensional $\boldsymbol \rho$-space, where each candidate $\boldsymbol{\rho}$ induces a PW-kGPP and an IMCF subproblem that can be solved to construct a feasible solution $(\mathbf{x},\mathbf{f})$, with iterative feedback guiding the upper-level exploration.
Accordingly, we state the following proposition.
\begin{proposition}
	Assume that: (i) the outer layer performs an exact global search over the feasible set of $\boldsymbol{\rho}$ for the upper-level search; and
    (ii) for any fixed $\boldsymbol{\rho}$, the inner layer solves the corresponding PW-kGPP and IMCF subproblems to global optimality.
    Then, the solution returned by the nested procedure is a global optimal solution to ($\mathbb{P}2$).
    The proof is given in Appendix~\ref{apdxA}.
	\label{T5}
\end{proposition}


We emphasize that Proposition~\ref{T5} characterizes an idealized setting in which $(\mathbb{P}2)$ is solved optimally by exactly minimizing its objective (\ref{f17}a) over $\boldsymbol{\rho}$.
In practice, both PW-kGPP and IMCF are intractable; consequently, the nested procedure generally returns a high-quality but suboptimal solution to $(\mathbb{P}2)$. 
Encouragingly, efficient algorithms exist for related special cases, and practical heuristics are available: the balanced graph partitioning problem (a special case of PW-kGPP) admits an $O(\log n)$ approximation~\cite{balanced}, heuristics such as METIS~\cite{metis} perform well empirically for general PW-kGPP instances, and for IMCF, the $k$-shortest path algorithm provides high-quality candidate paths for each \textit{Cut}-LL.

However, since $(\mathbb{P}2)$ optimizes in a greedy, per-request manner, solving it independently can be myopic with respect to the cumulative online objective in $(\mathbb{P}1)$.
To mitigate this limitation, the outer-loop update of $\boldsymbol{\rho}$ should be guided by a global evaluation metric that aggregates performance over the entire online process and serves as a proxy for the objective in $(\mathbb{P}1)$.
By capturing cross-request effects (e.g., resource fragmentation), this metric discourages per-request choices that appear reasonable for $(\mathbb{P}2)$ but degrade future feasibility emphasized by $(\mathbb{P}1)$.
Thus, the outer loop search over $\boldsymbol{\rho}$ should not be driven by minimizing the $(\mathbb{P}2)$ objective; instead, it should aim to optimize this proxy metric.
Meanwhile, the inner subproblem solvers remain responsible for constructing a feasible SEM decision for each fixed $\boldsymbol{\rho}$.

Motivated by the above decomposition and the need for global guidance across time, we propose the modular ABS framework (Fig.~\ref{fig: ABS}), which couples time-global guidance with per-request local optimization for SEM.
It consists of three interdependent core components that form an iterative feedback loop, enabling dynamic adjustments and continual improvement in terms of the long-term evaluation metric:
\begin{itemize}
    \item Adaptive Search Engine:
    It employs adaptive strategies to traverse the PWV space in search of a high-performing $\boldsymbol{\rho}$ candidate. 
    Based on feedback from the Global Evaluation Module, it prioritizes regions with high scores and dynamically refines its search trajectory.
    
    \item Local Subproblem Solver:
     It efficiently solves the PW-kGPP and IMCF subproblems using state-of-the-art practical algorithms, producing high-quality SEM decisions for each candidate $\boldsymbol{\rho}$.
    
    \item Global Evaluation Module:
    It assesses the quality of the SEM solution from a time-global perspective and provides feedback to the Adaptive Search Engine, guiding its exploration toward promising regions.
\end{itemize}
\begin{figure}[!tp]
    \centering
    \includegraphics[width=0.85\linewidth]{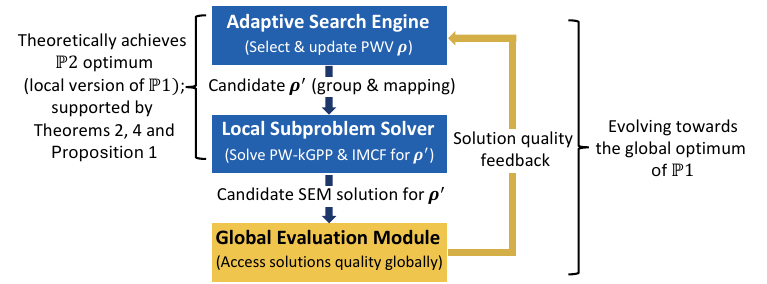}
    \caption{Architecture of the Adaptive Bilevel Search (ABS) framework for the SEM problem.}
    \label{fig: ABS}
    \vspace{-10pt}
\end{figure}

From a high-level perspective, ABS bridges global (online) evaluation and local (per-request) optimization: it uses cumulative performance feedback (a proxy of the objective in $(\mathbb{P}1)$) to adaptively steer the outer-loop search, while retaining a best-response inner solver that, for each fixed $\boldsymbol{\rho}$, efficiently solves the induced PW-kGPP and IMCF subproblems.


\subsection{Fragmentation-Based Metrics for Global Evaluation}
\label{subsecs:analysis-C}

According to \emph{\textbf{Insight 3}}, it is necessary to evaluate each mapping decision found by ABS in real time based on the current fragmentation state, thereby further enhancing its iterative capability toward the long-term optimum.
To overcome the limitations of conventional load-dependent and resource-centric fragmentation evaluation methods~\cite{beware}, we propose a service-centric approach that focuses on the immediate impact of each mapping decision on resource utilization, without relying on any prior knowledge of input service entities.

We first provide additional definitions for the evaluated mapping decision $(\mathbf{\hat x, \hat f})$ to facilitate expression. 
Let $N_i^s$ denote the set of all \CN s participating in hosting \SF s in the decision $(\mathbf{\hat x, \hat f})$\footnote{The mapping decision in planning is defined as the one for the $i^{th}$ request.}.
Then, \nolinebreak the resource amount utilized of any participating \CN , $m^s \in N_i^s$, can be formulated as:
\begin{align}
	P_C(m^s) = \sum_{u^v \in N^v_i} \hat x^{u^v}_{m^s} \cdot c(u^v).
\end{align}
Besides, we denote each participating \CN 's correlated bandwidth consumption as:
\begin{align}
	P_{BW}(m^s) = \sum_{l^v_{uw} \in \hat {CL}^v_i} (\hat x^{u^v}_{m^s} + \hat x^{w^v}_{m^s}) \cdot b(l^v_{uw}).
\end{align} 
Observing each of the three fragmentation scenarios exemplified in Fig.~\ref{fig: case}, we propose tailored evaluation metrics designed to prevent the formation of such fragments, where a higher metric value indicates a more desirable solution.

\textbf{Node Resource Exhaustion Degree (NRED)} quantifies the overall resource exhaustion level across all participating \CN s.
Inspired by Case 4, we establish a ``using it all'' criterion for each decision. 
Once a \CN \ is selected to host \SF s, \SF s are mapped to consume as much of its available resources as possible, prioritizing their total resource requirements over their mere count. 
This approach minimizes residual computing resources on utilized nodes, thereby reducing potential resource fragmentation.
It can be formulated as follows: 
\begin{align}
	NRED =
	\frac{\sum_{m^s \in N_i^s} \biggl( \frac{P_C(m^s)}{C(m^s)} \biggr)}{\sum_{m^s \in N_i^s} \biggl \lceil \max \bigl( 1 - \frac{P_C(m^s)}{C(m^s)} - \delta, \, 0 \bigr) \biggr \rceil + \epsilon},
\end{align}
where $\delta$ is a threshold parameter, and $\epsilon$ is a small constant preventing zero division.
The numerator sums the resource utilization ratio, $\frac{P_C(m^s)}{C(m^s)}$, of each participating \CN \ to reward node exhaustion, increasing as each node approaches full utilization. 
Meanwhile, the denominator equally penalizes the \CN s that have not fully utilized their resources, where $1 - \frac{P_C(m^s)}{C(m^s)} > 0$, assigning a uniform penalty factor. 
The NRED value decreases as the number of under-utilized \CN s increases.
Moreover, the threshold $\delta$ excludes those nearly exhausted \CN s (remaining ratio: $1 - \frac{P_C(m^s)}{C(m^s)} \leq \delta$) from penalties to enhance practical applicability.
The optimal NRED value is achieved when all participating \CN s are fully exhausted, leading to a denominator of $\epsilon$ and maximizing NRED. 

\textbf{Computing to Bandwidth Utilization Gap (CBUG)} evaluates the overall computing-to-bandwidth utilization status of all participating \CN s.
In Case 1, \CN \ \emph{E} does not have sufficient bandwidth to support the \LL s \emph{b-a} and \emph{b-c} in the next request.
To avoid this situation, we expect that after each mapping, the remaining computing resources of a node are minimized, while its residual correlated bandwidth is maximized, thereby allowing the node to continue functioning as a forwarding node.
Here, a \CN's correlated bandwidth refers to the total available bandwidth of all \NL s connected to it.
We propose a ``keeping the gap'' bandwidth utilization strategy to achieve this purpose.
In this design, the correlated bandwidth consumption should be as small as possible relative to the computing resource utilization for each participating \CN \ $m^s$. 
That is, the ratio $\frac{P_C(m^s)}{P_{BW}(m^s)}$, which we refer to as the utilization gap, should be maximized. 
Furthermore, a larger gap indicates that more computing resources are utilized per unit of bandwidth consumed, thereby avoiding cases where bandwidth is exhausted while computing resources remain underutilized.
\textbf{CBUG} is formulated as follows:
\begin{align}
	CBUG = 
	\frac{\sum_{m^s \in N_i^s} \frac{P_C(m^s)}{P_{BW}(m^s) + \epsilon}}{| N_i^s |},
\end{align}
where we set $\epsilon$ as a small constant to prevent zero division.

\textbf{Path Node Valuelessness Level (PNVL)} quantifies the impact of mapping \textit{Cut}-\LL s on computing resource utilization. 
During LLnM, a \CN\ consumes bandwidth both for locally hosted \SF s’ \textit{Cut}-\LL s and, as a forwarding node, for tunnel paths. After SFnM, the residual computing resources of each \CN\ are known; nodes with more resources are more valuable for future requests and should ideally be less involved in forwarding. 
For example, in Case 3, although CN \emph{F} retains sufficient computing resources to host a future \SF, it lacks the necessary bandwidth for associated \LL s due to prior forwarding tasks. 
Therefore, we adopt a ``traversing less valuable nodes'' criterion and define a metric to evaluate the valuelessness level of forwarding nodes along the mapping path of each \textit{Cut}-\LL. 
Specifically, for a \textit{Cut}-\LL\ $l^v_{uw}$, $\hat{MoP}(l^v_{uw})$ denotes the set of \CN s acting as its forwarding nodes under decision $(\mathbf{\hat x, \hat f})$, and the metric is formulated as:
\begin{align}
	P_{PV} (l^v_{uw}) = 
	\frac{\sum_{m^s \in MoP(l^v_{uw})}
\frac
{b(l^v_{uw})}
{C(m^s) - P_C (m^s) + \epsilon}}
{e^{- |\hat {MoP}(l^v_{uw})|}},
\end{align}
where $\epsilon$ is a small constant to prevent zero division.
In the numerator, we sum the ratios of each \textit{Cut}-\LL 's bandwidth requirement to the residual computing resources of each forwarding node, discouraging the traversal of more valuable \CN s. The denominator uses an exponential function to penalize paths with excessive hop counts, thereby limiting the number of forwarding nodes and reducing bandwidth consumption during LLnM.
Based on this, we define \textbf{PNVL} to assess the overall impact of all \textit{Cut}-\LL s in a decision:
\begin{align}
PNVL =
\frac{\sum_{l^v_{uw} \in \hat {CL}i^v} P_{PV}(l^v_{uw})+\epsilon'}{|\hat {CL}_i^v|+\epsilon},
\end{align}
where $\epsilon \ll \epsilon'$ prevent division by zero. This metric is the arithmetic mean of path evaluation across all \textit{Cut}-\LL s.

\subsection{
Implementation Details
}
\label{secs:algorithm}

We now present the ABS implementation via Distributed Particle Swarm Optimization, including the workflow, evolution, and initialization strategies, as well as a computational complexity analysis (see the pseudocode of Algorithms 1–4 in Appendix~\ref{apdxb}).

\subsubsection{Basic Idea} To implement the ABS framework, we adopt the distributed elite-guided learning particle swarm optimizer (DEGLSO) framework~\cite{distributed}, which offers inherent scalability and flexibility for complex CPN scenarios.
It builds on the classic Particle Swarm Optimization (PSO) algorithm by incorporating elite-guided learning and distributed computing strategies, enhancing both convergence speed and scalability.

From the perspective of ABS's architecture, we adopt the DEGLSO as the Adaptive Search Engine. 
We integrate our proposed fragmentation-based evaluation mechanism into constructing the fitness function of PSO, serving as the Global Evaluation Module, which is formulated as: 
\begin{align}
	\mathcal{F} = \frac{1}{
	\varpi_1 \cdot NRED + \varpi_2 \cdot CBUG + \varpi_3 \cdot PNVL
	},
	\label{f29.5}
\end{align}
where $\varpi_1$, $\varpi_2$, and $\varpi_3$ represent the weighting coefficients among the evaluation metrics.
The specific values of these weights can be determined based on the observed correlation between higher values of these metrics and improved overall performance, as discussed in Section~\ref{subsubsec:extends}.
To solve the two subproblems, PW-kGPP and IMCF, we utilize METIS\footnote{https://github.com/KarypisLab/METIS} and k-shortest path\footnote{https://networkx.org/algorithms/shortest\_paths.html} as the Local Subproblem Solver.

However, DEGLSO was initially designed for general optimization problems, and its mechanisms do not fully align with the specific requirements of the ABS framework. 
The following subsections detail how we adapt and apply DEGLSO in our implementation.


\subsubsection{Overall Workflow}


     

\begin{figure}[!tp]
    \centering

    \includegraphics[width=0.8\linewidth]{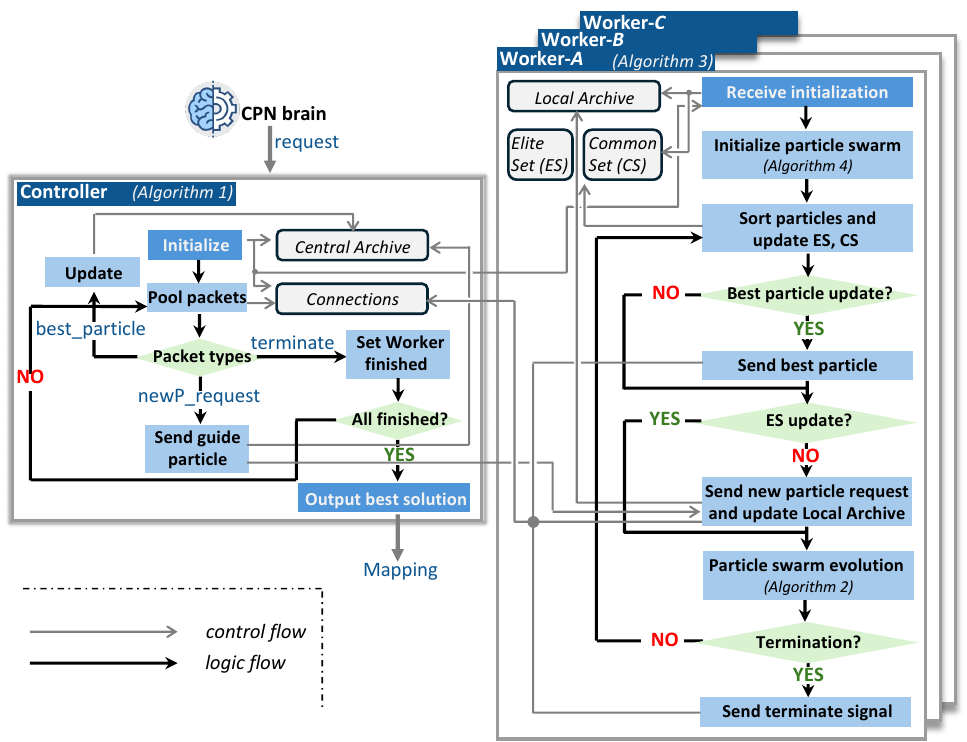}
     
    \caption{ABS framework implementation scheme.}
    \label{fig: algorithm}
    \vspace{-10pt}
\end{figure}

Our \OURALGO\ inherits the controller-worker model of DEGLSO, where the main procedure functions as the controller.
During optimization, the controller aggregates and updates the global search state based on feedback from all the workers.
Each worker independently manages a local particle swarm, performing iterative evolution updates under the guidance of the controller.
After all workers complete their iterations, the controller selects the particle with the best fitness value among all reported results and uses its corresponding solution as the mapping decision for the current service request.
If the workers cannot report any particles, this request will be rejected  (see Fig.~\ref{fig: algorithm} for an overview).

Unlike general PSO approaches, where a particle's position directly represents a candidate solution, our implementation assigns the position a dual role.
First, it explicitly encodes $\boldsymbol{\rho}$, the core variable updated during the search (referred to as the explicit position, or simply the position).
Second, since fitness evaluation requires the SEM solution $(\mathbf{x}, \mathbf{f})$, this solution is derived from the explicit position $\boldsymbol{\rho}$ and forms the particle's implicit position (hereafter, the solution).
Positions are updated via (\ref{f30}) and (\ref{f31}), while fitness is computed on the solution using $\mathcal{F}(\cdot)$ in (\ref{f29.5}).

The controller maintains an archive as the core of the collaborative process, storing the current optimal particles reported by workers. 
This archive continuously incorporates the latest global search results and provides guidance for workers’ evolutionary updates. 
Controller–worker collaboration is achieved through asynchronous communication: the controller establishes dedicated channels with each worker, monitors incoming messages, updates the archive, and responds with evolutionary guidance as needed. 
Once all workers have completed their iterations, the controller finalizes and returns the results.
During the optimization process, communication occurs when a worker discovers a better local solution, requires guidance for further search, or finishes its computation. 
In each case, the controller evaluates new information, updates the archive if appropriate, and provides selected particles from the archive to guide the workers’ evolution (see details in Fig.~\ref{fig: algorithm} and Algorithm~\ref{alg:master}).

\subsubsection{Evolution Iteration Phase}
\label{subsubsecs: evolve}
During this phase, each worker in \OURALGO \ maintains a local particle swarm and executes its evolutionary updates, driving the whole iteration process (see Fig.~\ref{fig: algorithm} for an overview).

To enhance high-dimensional search efficiency, we incorporate the elite-guided learning strategy of DEGLSO. 
In each iteration, the local swarm is divided into an elite set ($\mathbf{ES}$) and a common set ($\mathbf{CS}$) based on solution quality. 
Only particles in $\mathbf{CS}$ are updated, guided by both local elite particles (from the local archive, $\mathbf{LA}$) and historical best particles identified by different workers. 
This approach helps maintain diversity and enables particles to escape local optima.

Taking any dimension $d \in \{1, 2, ..., |N^s|\}$ of the vector $\boldsymbol{\rho}$ as an example, the update rule for the vectors of these common particles can be expressed as:
\begin{align}
	v_d^{t+1} = \underbrace{\zeta_1 v_d^{t}}_{\text{inertia}} + \underbrace{\zeta_2 (e^t_d - \rho^t_d)}_{\text{a random elite}} + \underbrace{\phi \zeta_3 (\hat E^t_d - \rho^t_d)}_{\text{elites' mean position}}, \quad \quad \, \label{f30}
\end{align}
\begin{align}
	\rho_d^{t+1} = \rho_d^{t} + v_d^{t+1}, \quad \quad \quad \quad \quad \quad \quad \quad \quad \quad \quad \quad \quad \label{f31}
\end{align}
\begin{align}
	\hat E^t = \frac{1}{|ES^t \cup LA^t|} \cdot \sum_{\rho \in ES^t \cup LA^t} \rho \, , \quad \quad \quad \quad \quad \quad \,  \label{f32}
\end{align}
where $\boldsymbol{\rho}^t$ and $\mathbf{v}^t$ are the position and velocity of any particle in the $\mathbf{CS}^t$, respectively; $\boldsymbol{e}^t$ is the position of a randomly selected particle from $\mathbf{ES}^t \cup \mathbf{LA}^t$; $\mathbf{\hat E}^t$ is the mean position of all the local elite particles; $t$ is the index of the current iteration; and $\zeta_1, \zeta_2, \zeta_3 \sim U(0,1)$ are independent random coefficients that stochastically weight the inertia and attraction terms.
Moreover, $\phi$ is a linear adjuster to enhance the exploitation at the late stages~\cite{modifiedpso}, expressed as:
\begin{align}
	\phi^t = 1 - \frac{t}{G_{max}},
	\label{f33}
\end{align}
where $t$ is the index of the current iteration, and $G_{max}$ is the maximal iteration number. 
In our setting, high-dimensional sparsity emerges due to \SF\ co-location, especially as the number of \CN s increases, resulting in most PWV elements being zero. Such sparsity significantly complicates efficient search and reduces the likelihood of finding feasible solutions.
To address this challenge, we introduce a separate search mechanism. Specifically, each particle first performs unconstrained exploration in continuous space ($\boldsymbol{\rho}$), fully leveraging the flexibility of free search without imposing sparsity or normalization constraints. 
Subsequently, a dynamic top-$n$ masking is applied, where only the $n$ largest components of $\boldsymbol{\rho}$ are retained and normalized, yielding a sparse vector ($\boldsymbol{\rho'}$) that satisfies the simplex constraint. This sparse vector is then used as the PWV for solving the subproblems. 
If both subproblems yield feasible solutions, the result is assigned to the particle solution, and the search dimension $n$ is reduced to progressively focus the search on the most critical variables (see details in Algorithm~\ref{alg:evolve}).

To support global optimization, each worker maintains its local state and collaborates with the controller.
After initialization, the worker updates the \textbf{CS} of particles in each iteration. 
If the local best particle improves, this information is communicated to the controller. 
When no improvement occurs in the \textbf{ES}, the worker requests guidance and updates its \textbf{LA} based on feedback from the controller. 
The worker then proceeds to evolve the particles in the \textbf{CS} for the next iteration (see Fig.~\ref{fig: algorithm} and Algorithm~\ref{alg:worker} for details).

\subsubsection{Initialization Phase}
\label{subsubsecs: init}

Particle initialization must satisfy SEM structural constraints and sufficiently cover the solution space; otherwise, insufficient constraints may produce infeasible solutions and lead to many wasted iterations, while overly strict constraints can hinder exploration.
To achieve this, we propose a semi-constrained randomized breadth-first initialization method, which efficiently initializes the particle swarm for each worker and supports effective iterations.
For each particle, candidate \CN s with abundant resources are probabilistically selected and expanded in a breadth-first manner to minimize inter-node distances, thereby reducing LLnM cost and improving SFnM feasibility. 
To ensure sufficient exploration, the process dynamically expands the candidate region, allowing broader network areas to be considered until a feasible solution is found or resources are exhausted. 
See Algorithm~\ref{alg:init} for detailed steps.

\subsubsection{Time Complexity Analysis}

The overall computational complexity of the proposed distributed ABS implementation is primarily determined by the iterative main loop executed by each worker.
It has polynomial time complexity, given by $\mathcal{O}(ITR_{max}\cdot|N_S| \cdot k\cdot|N^s|\cdot(|L^s|+|N^s|\log|N^s|))$, indicating that the algorithm can efficiently scale to large and complex problems through multi-worker parallelization. 
A detailed complexity analysis is provided in the Appendix~\ref{apdxb}.

\section{
Performance Evaluation}
\label{secs:evaluation}
\begin{table}[tb]
  \centering
  \caption{Simulation parameters}
    \begin{tabular}{llll}
    \toprule
    \textbf{Parameters} & \multicolumn{2}{l}{\textbf{CPN Topology }} & \textbf{SE Topology } \\
    \midrule
          & Random & \multicolumn{1}{p{5.415em}}{Rocketfuel} &  \\
\cmidrule{2-3}    Number of Topology & 1     & 1     & 2000 \\
    Number of Nodes & 100   & 129   & 50-100 \\
    Number of Links & 500   & 363   & \textit{Random} (0.9) \\
    Node Capacity   & 400-600 & 400-600 & 1-20 \\
    Bandwidth  & 400-600 & 400-600 & 1-20 \\
    Arrival Time  & -     & -     & \textit{Poisson} (0.1) \\
    Life Time  & -     & -     & \textit{Exp} (500) \\
    \bottomrule
    \end{tabular}%
  \label{tab:parameter}%
\end{table}%

\subsection{Experimental Setup}

\subsubsection{Benchmarks}

We compare our \OURALGO\ against six algorithms spanning three mainstream categories: heuristic (\textbf{RW-BFS}~\cite{rw},\textbf{ RMD}~\cite{rmd}), meta-heuristic (\textbf{EA-PSO}~\cite{psovne}, \textbf{GA-STP}~\cite{gastp}), and learning-based (\textbf{RL-QoS}~\cite{pgcnn},\textbf{ GAL}~\cite{gal}) methods.
Note that all these algorithms are designed for the traditional VNE problem. 
For a fair comparison, we adapt them for the SEM problem by removing the one-to-one mapping constraint between \SF s and \CN s.
 Details of baselines are given in Appendix~\ref{apdxd}.

\subsubsection{Performance Metrics}

For evaluating algorithm efficacy, our primary objectives are to maximize \textbf{acceptance ratio}, \textbf{revenue}, \textbf{long-term average revenue (LT-AR)}, \textbf{profit}, and \textbf{\CN\ Resource Utilization (CU-Ratio)}. 
\textbf{Revenue to cost ratio (RC-Ratio)} and \textbf{long-term revenue to cost ratio (LT-RC-Ratio)} are also reported to provide a reference for cost efficiency, though we note their limitations as comprehensive evaluation metrics. 
Cost alone is not used as a primary metric, as, by definition, it not only reflects resource expenditure but may also indicate higher service acceptance and revenue.
The metric details are also elaborated in Appendix~\ref{apdxd}.

\subsubsection{Simulator settings}
We conduct simulations in a numerically generated environment to evaluate algorithm performance with online service requests. SE topologies arrive sequentially to a CPN topology, which is either a random Waxman model~\cite{rw} or a real-world Rocketfuel (AS6461) instance~\cite{rocket}. 
To emulate complex services and dense inter-function dependencies, SEs are large-scale with high link connectivity~\cite{divide}. Node and link resources are uniformly distributed. Request arrivals follow a Poisson process, and their lifetimes follow an exponential distribution. All specific simulation parameters are detailed in Tab.~\ref{tab:parameter}. 

The simulation experiments are conducted on a server equipped with an Intel(R) Xeon(R) Gold 5218R CPU @ 2.10GHz, an NVIDIA GeForce RTX 3080 GPU, and 252GB of RAM. The computational environment is based on Python 3.10.16, employing PyTorch 1.12.1 and PyTorch Geometric 2.2.0 as its machine learning frameworks. GPU acceleration is facilitated by driver version 515.65.01 and CUDA 11.7. 

\subsection{Performance Results and Analysis}

\begin{table*}[!tp]
  \centering
  \caption{Comprehensive evaluation of all compared algorithms. The top three performances are highlighted using \textcolor{red}{\textbf{First}}, \textcolor{purple}{\textbf{Second}}, and \textbf{Third}.}
    \begin{tabular}{llccccccc}
    \toprule
    \multicolumn{1}{r}{} & Algorithm  & Acpt. Ratio $\uparrow$ & Revenue  $\uparrow$ & LT-AR $\uparrow$ & Profit $\uparrow$ & RC-Ratio & LT-RC-Ratio \\
    \midrule
    \midrule
    \multicolumn{9}{l}{\textit{Testing Under Random Topology}} \\[0.2em]
    \hdashline[2pt/5pt]
    \\[-0.7em]
    \multirow{2}[1]{*}{\textbf{Heuristic}} & \textbf{RW-BFS}~\cite{rw} & 0.589     &  7720887    & 181175     & 1082950     & 0.839        & 0.841 \\
          & \textbf{RMD}~\cite{rmd} & 0.381     & 4258676     & 100049     & 244752     & 0.827    & 0.828 \\
    \multirow{2}[0]{*}{\textbf{Meta-Heuristic}} & \textbf{EA-PSO}~\cite{psovne} & 0.523     & 6691894     & 154194     & 103220     & 0.530        & 0.531 \\
          & \textbf{GA-STP}~\cite{gastp} & 0.416     & 4475305     & 105223     & 351589     & 0.916       & 0.931 \\
    \multirow{2}[0]{*}{\textbf{Learning-Based}} & \textbf{RL-QoS}~\cite{pgcnn} & 0.592     & 7676094     & 182007     & 571727     & 0.752    & 0.636 \\
          & \textbf{GAL}~\cite{gal} & \textbf{0.613}     & \textbf{8249720}     & \textbf{190394}     & \textbf{1305402}     & 0.865     & 0.871 \\
    \multirow{2}[1]{*}{\textbf{Our Approach}} & \textbf{ABS}\textsubscript{init by RW-BFS} & \textcolor{purple}{\textbf{0.693}}     & \textcolor{purple}{\textbf{9411934}}     & \textcolor{purple}{\textbf{218374}}     & \textcolor{purple}{\textbf{1487462}}     & 0.745          & 0.753 \\
          & \textbf{ABS} & \textcolor{red}{\textbf{0.703}}     & \textcolor{red}{\textbf{9735826}}     & \textcolor{red}{\textbf{228028}}     & \textcolor{red}{\textbf{1613916}}     & 0.752          & 0.751 \\
    \midrule
    \midrule
    \multicolumn{9}{l}{\textit{Testing Under Real-World Topology}} \\[0.2em]
    \hdashline[2pt/5pt]
    \\[-0.7em]
    \multirow{2}[1]{*}{\textbf{Heuristic}} & \textbf{RW-BFS}~\cite{rw} & 0.234     & 2343136     & 57227     & 64054     & 0.999        & 0.997 \\
          & \textbf{RMD}~\cite{rmd} & 0.036     & 350989     & 8333     & 242     & 1.072     & 1.046 \\
    \multirow{2}[0]{*}{\textbf{Meta-Heuristic}} & \textbf{EA-PSO}~\cite{psovne} & 0.135     & 1330372     & 32558     & -675     & 0.487     & 0.502 \\
          & \textbf{GA-STP}~\cite{gastp} & 0.220     & 2028733     & 47645     & 57195     & 1.198     & 1.205 \\
    \multirow{2}[0]{*}{\textbf{Learning-Based}} & \textbf{RL-QoS}~\cite{pgcnn} & -     & -     & -     & -     & -     & - \\
          & \textbf{GAL}~\cite{gal} & \textbf{0.256}     & \textbf{2578649}     & \textbf{60050}     & \textbf{91009}     & 1.083     & 1.102 \\
    \multirow{2}[1]{*}{\textbf{Our Approach}} & \textbf{ABS}\textsubscript{init by RW-BFS} & \textcolor{purple}{\textbf{0.376}}     & \textcolor{purple}{\textbf{3927901}}     & \textcolor{purple}{\textbf{97433}}     & \textcolor{purple}{\textbf{246508}}     & 0.899     & 0.878 \\
          & \textbf{ABS} & \textcolor{red}{\textbf{0.410}}     & \textcolor{red}{\textbf{4486683}}     & \textcolor{red}{\textbf{103144}}     & \textcolor{red}{\textbf{333456}}     & 0.898     & 0.912 \\
    \bottomrule
    \end{tabular}%
  \label{tab:results}%
\end{table*}%

\subsubsection{Overall Performance}
Tab.~\ref{tab:results} presents the aggregated results of all algorithms over 2000 trials for all metrics (except for the CU-Ratio) under both CPN topologies. 
Note that ABS\textsubscript{init by RW-BFS} is a variant of \OURALGO, where the \texttt{init\_solver} function in the particle swarm initialization is replaced with the RW-BFS algorithm.

As shown in Tab.~\ref{tab:results} and Fig.~\ref{fig: cur}, \OURALGO\ and its variant consistently achieve the best performance among all evaluated algorithms on the key metrics under both random and real-world topologies.
This strong performance stems from several key design choices of ours.
First, by adopting a graph partitioning perspective, ABS transforms independent mapping decisions into a co-location-aware search process, enabling effective \SF \ grouping and mapping tailored for CPN scenarios.
Moreover, its bilevel optimization structure efficiently explores the solution space via locally optimal subproblems.
Finally, the resource fragmentation-based feedback mechanism offers a global evaluation metric that dynamically guides the search toward global optimality.

Among learning-based methods, GAL stands out among all baselines due to its two-stage design: it first imitates the node ranking pattern of RW-BFS using a graph neural network, and then refines its policy through reinforcement learning.
It's worth noting that RL-QoS, although ranking fourth on the random topology, fails to converge effectively on the more network resource-constrained real-world topology.
The main issue lies in RL-QoS's use of auto-regressive RL trained from scratch, which is prone to error accumulation and struggles with sparse rewards in challenging settings.
In contrast, GAL's initial imitation learning phase enables more effective early exploration, which helps overcome these challenges.

\begin{figure*}[tp]
    \centering
    \subfigure[Acceptance ratio]{
        \includegraphics[width=0.265\linewidth]{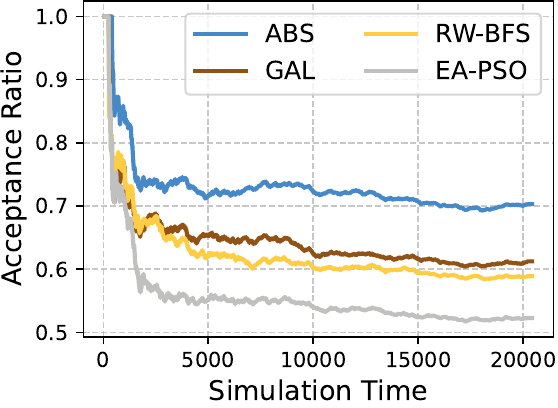}
        \label{fig: randomacpt}
    }
    \hspace{0.1em} 
    \subfigure[Long-term average revenue]{
        \includegraphics[width=0.265\linewidth]{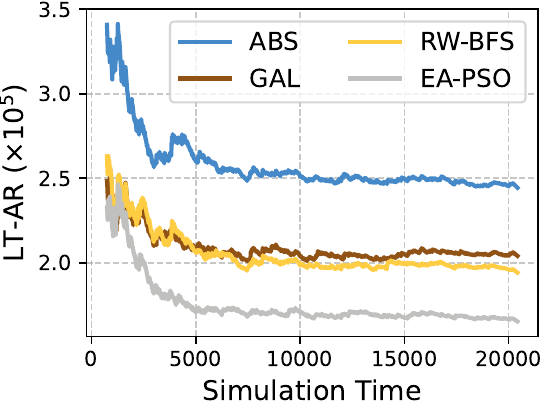}
        \label{fig: randomltr}
    }
    \hspace{0.1em}
    \subfigure[Long-term revenue to cost ratio]{
        \includegraphics[width=0.265\linewidth]{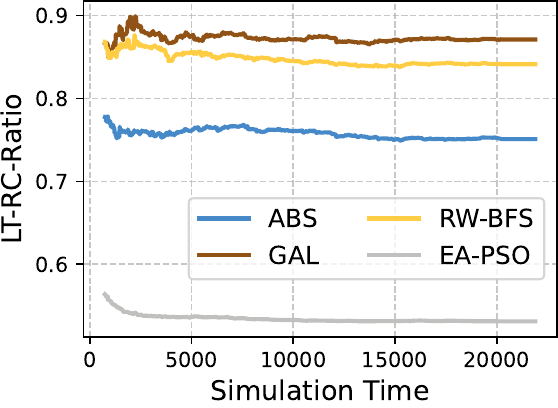}
        \label{fig: randomltrc}
    }
    \\
    \subfigure[Acceptance ratio]{
        \includegraphics[width=0.265\linewidth]{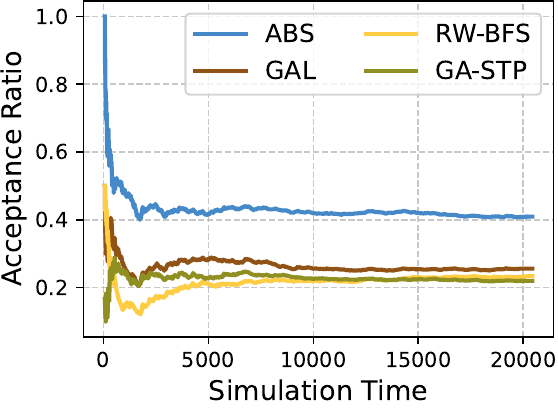}
        \label{fig: realacpt}
    }
    \hspace{0.1em} 
    \subfigure[Long-term average revenue]{
        \includegraphics[width=0.265\linewidth]{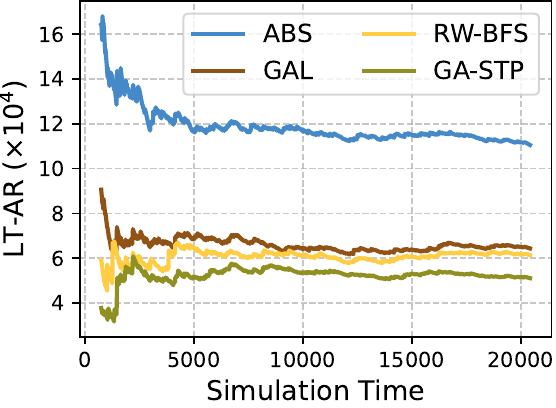}
        \label{fig: realltr}
    }
    \hspace{0.1em}
    \subfigure[Long-term revenue to cost ratio]{
        \includegraphics[width=0.265\linewidth]{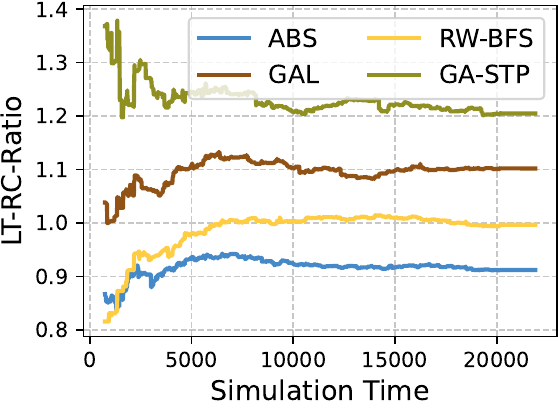}
        \label{fig: realltrc}
    }
    \vspace{-5pt}
    \caption{Acceptance ratio, long-term average revenue, and long-term revenue to cost ratio over simulation time for the random topology (upper) and the real-world topology (lower) according to Tab.~\ref{tab:results}. (All subsequent performance figures display only the best-performing algorithm from each category of solving methods.)}
    \label{fig: all1}
    \vspace{-3pt}
\end{figure*}
\vspace{-3pt}
\begin{figure}[t]
    \centering
    \subfigure[Random topology]{
        \includegraphics[width=0.46\linewidth]{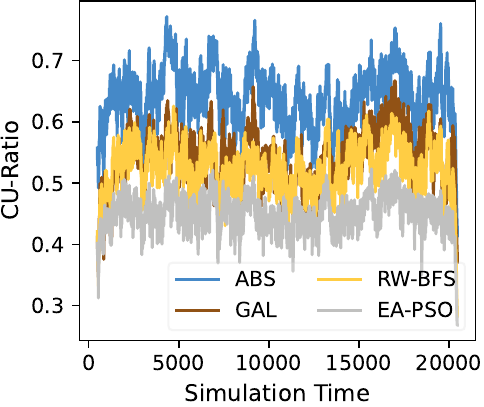}
        \label{fig: smallcur}
    }
    \hspace{-5pt}
    \subfigure[Real-world topology]{
        \includegraphics[width=0.46\linewidth]{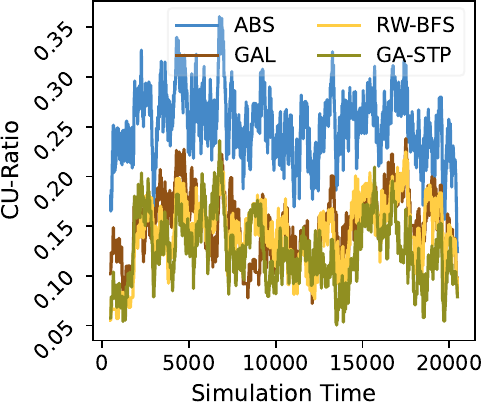}
        \label{fig: largecur}
    }
    \vspace{-5pt}
    \caption{Computing resource utilization comparison over
simulation time.}
    \label{fig: cur}
     \vspace{-5pt}
\end{figure}
RW-BFS maintains solid performance among heuristics by using topology-aware node ranking, often grouping nodes with similar scores and thus tending to co-locate \SF s with comparable roles in the network.
In contrast, RMD leverages graph partitioning to minimize the bandwidth requirement of \textit{Cut}-\LL s, ensuring co-location assignments are always optimal from the partitioning perspective.
However, its reliance on rigid local greedy strategies often leads to suboptimal co-location groups and mappings, ultimately resulting in the poorest overall performance.

Unexpectedly, meta-heuristic algorithms EA-PSO and GA-STP, despite their strong global search capabilities, deliver only moderate results.
Since both operate on independent node-level decisions and ignore the challenges introduced by \SF\ co-location, their best performance merely approaches that of RW-BFS, rather than surpassing it.

\subsubsection{Metric-based Analysis}
A comparison and discussion of each performance metric is provided here.

Acceptance ratio:
It directly impacts user satisfaction, as frequent service denials can erode user patience. 
As shown in Tab.~\ref{tab:results}, our \OURALGO\ achieves significantly higher acceptance ratios than other algorithms, with improvements of at least 15\% under the random topology and 60.2\% under the real-world topology.
However, the acceptance ratio alone has its limitations: it does not capture the quality of accepted requests. 
For example, given the same acceptance ratio, accepting a large number of requests with simple SEs may not lead to efficient resource utilization.

Revenue: This metric intuitively measures both the quantity and quality (complexity) of accepted requests, capturing the overall service gain.
Higher revenue also indicates more efficient resource utilization of the algorithm, as it creates more value from limited resources.
Therefore, combining revenue with acceptance ratio can provide a more comprehensive assessment of mapping quality.
As shown in Tab.~\ref{tab:results}, our \OURALGO \ achieves at least an 18\% and a 74.1\% improvement in total revenue under random and real-world topologies,  respectively.
Notably, compared to GAL, \OURALGO{} achieves a much greater improvement in total revenue than in acceptance ratio (e.g., 74.1\% vs. 60.2\% in the real-world topology).
This significant gap further indicates that \OURALGO{} is capable of accepting more complex and higher-value service requests, rather than merely increasing the number of accepted requests.


Long-term average revenue: 
It complements revenue by evaluating the ability to sustain high revenue throughout the simulation, reflecting long-term operational effectiveness rather than short-term bursts.
Comparing the trends of acceptance ratio and long-term average revenue provides insights into the decision-making quality of different algorithms. Specifically, our \OURALGO{} consistently exhibits a larger gap over other algorithms in long-term average revenue than in acceptance ratio, indicating a persistent preference for higher-quality requests.
As a counterexample, in Figs.~\ref{fig: randomacpt} and~\ref{fig: randomltr}, although GAL achieves a higher acceptance ratio than RW-BFS after simulation time 5000, its long-term average revenue does not show a corresponding increase. 
Normally, if the value distribution of requests remains unchanged, a higher acceptance ratio should result in a higher average revenue.
This suggests that GAL is accepting more small, less valuable requests during this period. A similar phenomenon can be observed in Figs.~\ref{fig: realacpt} and~\ref{fig: realltr}.
These findings further validate, from a long-term operational perspective, that \OURALGO{}’s mapping decisions favor complex, high-value service requests, rather than simply maximizing the number of accepted requests.



Profit:
Tab.~\ref{tab:results} presents the profit performance of each algorithm under the parameter settings ($\theta=2$, $\omega=0.5$), where \OURALGO\ consistently demonstrates a significant advantage.
It is essential to note that higher profits do not always require minimizing costs or maximizing the revenue-to-cost ratio. 
As long as costs are reasonable, prioritizing higher acceptance ratios and higher-quality requests can yield greater profit.



\begin{figure*}[t]
    \centering
    \subfigure[Random topology]{
        \includegraphics[width=0.85\linewidth]{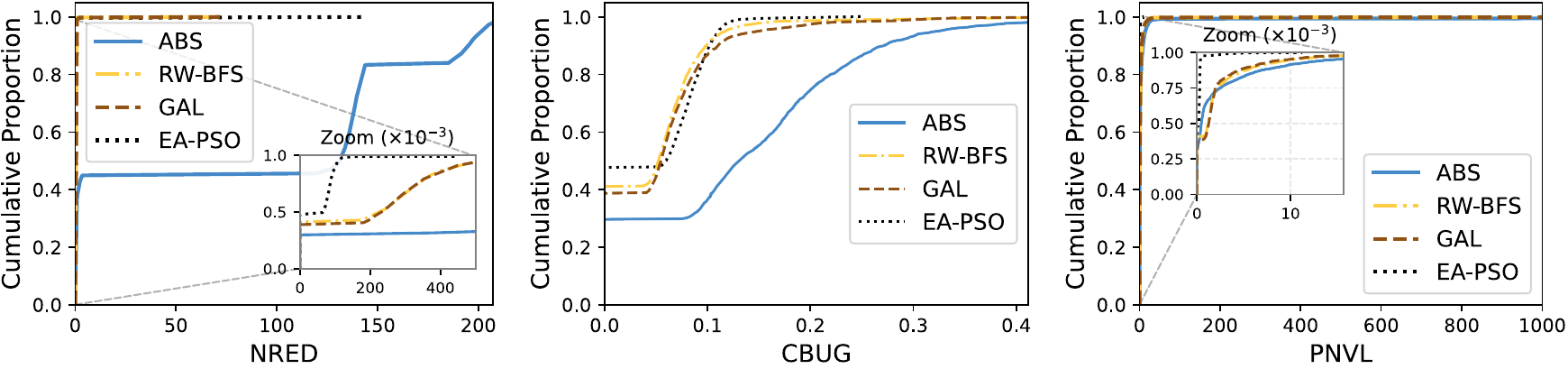}
        \label{motivation: average latency}
    }
    \\
    \subfigure[Real-world topology]{
        \includegraphics[width=0.85\linewidth]{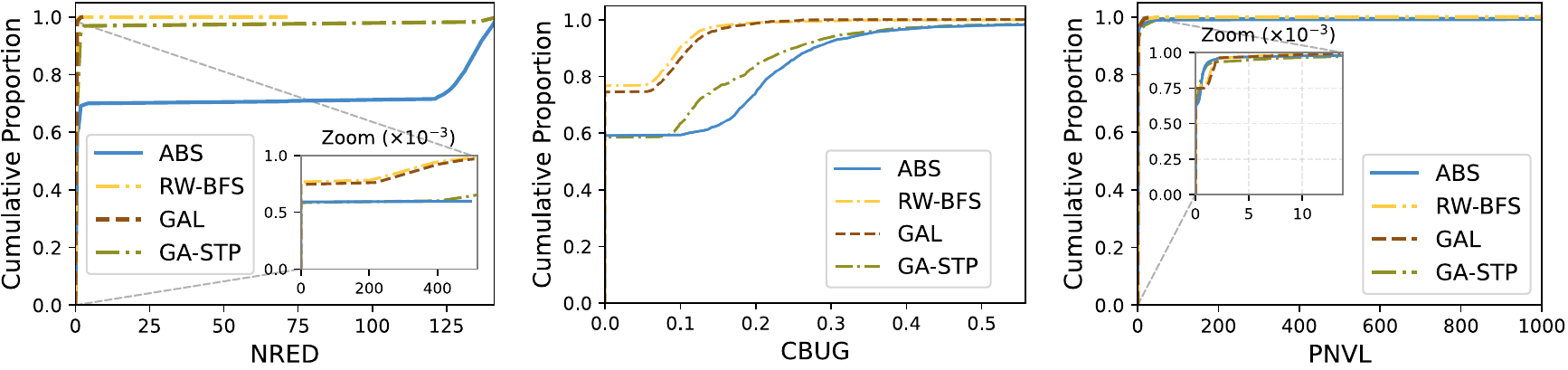}
        \label{motivation: average latency1}
    }
    \vspace{-5pt}
    \caption{CDF distributions of fragmentation evaluation metrics (NRED, CBUG, and PNVL) for the decisions on 2{,}000 test requests: \OURALGO\ versus each method's best-performing algorithm in the random topology (upper) and the real-world topology (lower). 
    For the combined fragmentation-based metric in (\ref{f29.5}), the weights $(\varpi_1,\varpi_2,\varpi_3)$ corresponding to (NRED, CBUG, PNVL) are selected via grid search and fixed in all experiments as $(0.52, 0.47, 0.20)$.}
    \vspace{-5pt}
    \label{fig: cdf}
\end{figure*}

\CN\ Resource Utilization: 
This metric directly reflects the effectiveness in integrating and utilizing distributed computing resources, which is the fundamental goal of CPN.
Once the acceptance ratio stabilizes, it provides an intuitive measure of an algorithm's resource exploitation capability. 
As illustrated in Fig.~\ref{fig: cur}, \OURALGO{} achieves improvements of at least 19.3\% and 73.2\% over baseline algorithms under the two topologies, respectively. 
Notably, the consistency between \CN\ resource utilization and revenue improvements mutually reinforces both metrics, highlighting that \OURALGO{}'s revenue advantage is fundamentally rooted in its superior exploitation of distributed computing resources. 
Accordingly, revenue becomes a meaningful indicator of CPN’s core objective---maximizing resource utilization---which justifies our preference for maximizing revenue, even when it may involve higher costs.

Revenue to Cost Ratio and Long-term Revenue to Cost Ratio:
These metrics primarily reflect differences in network resource consumption when mapping SEs.
Lower \NL\ bandwidth usage for the same SE implies lower cost and higher RC-Ratios.
However, in our work, maximizing computing resource utilization is the primary objective, with network bandwidth serving as the enabler rather than the end goal. 
Therefore, we are willing to tolerate higher network expenditure if it leads to greater acceptance ratios, revenues, and resource utilization; as such, a higher RC-Ratio is not always preferable.
As shown in Tab.~\ref{tab:results} and Figs.~\ref{fig: randomltrc} and \ref{fig: realltrc}, although \OURALGO \ is not optimal in terms of the RC-Ratio, its LT-RC-Ratio maintains a stable trend over time, without significant fluctuations.
This steady trend demonstrates that the trade-off between network consumption and improved service gains, as well as computing resource utilization, is both rational and sustainable, reflecting \OURALGO 's well-balanced and reliable resource utilization strategy.

\subsubsection{Extended Discussion}
\label{subsubsec:extends}
In the following, we further elaborate on two essential features of ABS. 

A key advantage of our approach is its effective global search capability, which becomes especially important in network resource-constrained scenarios--—a situation that may be commonly encountered in CPN environments.
Both GAL and ABS\textsubscript{init by RW-BFS} build upon the solutions generated by RW-BFS and further improve performance by leveraging global search capabilities.
For example, in the rocketfuel topology, where network resources are limited due to a higher number of \CN s but fewer \NL s (see Tab.~\ref{tab:parameter}), GAL’s global search increases revenue from 6.8\% (RW-BFS) to 10.1\%, a 48.5\% improvement, while our ABS\textsubscript{init by RW-BFS} raises it from 21.9\% to 67.6\%, corresponding to a 208.7\% relative improvement. 
These improvements are a direct result of compelling global exploration, and the much larger gain achieved by our method over GAL further demonstrates the superior global search capability of our approach.

Moreover, \OURALGO\ incorporates an evaluation mechanism more closely aligned with actual performance than traditional network cost-based metrics. 
Existing search-based algorithms (e.g., EA-PSO, GA-STP, RL-QoS, GAL) mainly rely on revenue-to-cost ratios or equivalent metrics as guidance for decision-making.
However, as shown in Tab.~\ref{tab:results}, optimizing RC-Ratio does not correlate with revenue gains, indicating that this kind of local efficiency metric cannot capture system complexity or predict actual performance in CPN. 
In contrast, our fragmentation-based metrics more accurately reflect global resource utilization, with higher values indicating lower fragmentation and higher efficiency. 
The CDF analysis in Fig.~\ref{fig: cdf} shows that \OURALGO\ predominantly achieves optimal metric concentrations across topologies, outperforming the top baselines and validating that our metrics reliably indicate and promote superior global performance.
Additionally, the degree of separation between the CDF curves of NRED, CBUG, and PNVL highlights their differing associations with overall performance. 
Specifically, NRED shows the strongest correlation, followed by CBUG, while PNVL has the least. This suggests a natural weighting hierarchy among these metrics when evaluating decision quality. 
Therefore, in practical applications, higher weights should be assigned to metrics more strongly correlated with overall performance.

\section{Conclusion}
\label{secs:conclusion}

In this paper, we have conducted a formal study of the service entity mapping problem in CPN environments, targeting optimal service-to-infrastructure assignment that fully leverages network-enabled computing resource integration. 
We have formally defined the problem, proved its theoretical intractability, and highlighted key practical challenges, including variable coupling, solution space explosion, and the need for global performance guidance.
We have proposed the ABS approach---a modular and adaptive search framework, instantiated in this work with a distributed particle swarm optimizer.
ABS introduces graph partitioning-based reformulation, bilevel optimization, and fragmentation-aware evaluation to address the identified challenges effectively.
Extensive simulations across diverse CPN scenarios demonstrate that ABS achieves significant and sustained improvements over state-of-the-art methods across key performance metrics and confirm the effectiveness of the fragmentation-based evaluation mechanism.
In future work, we will leverage the modular design of ABS to explore advanced search strategies (e.g., DRL) and incorporate additional CPN scenario features, such as resource multidimensionality and heterogeneity.

\bibliographystyle{IEEEtran}
\bibliography{reference}

\clearpage
\begin{appendices}
\section{Proofs of Theoretical Results}
\label{apdxA}
\setcounter{definition}{0}
\setcounter{lemma}{0}
\setcounter{proposition}{0}



\setcounter{theorem}{1}

\setcounter{corollary}{1}


\begin{theorem}
	\textup{($\mathbb{P}3$)} and \textup{($\mathbb{P}4$)} are equivalent.
	\label{t2}
\end{theorem}

\begin{proof}
	Let $\mathbf{x}$ be feasible for ($\mathbb{P}3$). Define, for all $m^s\in N^s$,
	\begin{align}
		\rho_{m^s} \triangleq 
		\frac{\sum_{u^v \in N^v_i} c(u^v)\, x^{u^v}_{m^s}}{\sum_{u^v \in N^v_i} c(u^v)}, 
		\quad
		z_{m^s} \triangleq \mathbb{I}\Big\{\sum_{u^v \in N^v_i} x^{u^v}_{m^s} \ge 1\Big\},
		\label{eq:rho_z_def}
	\end{align}
    where $\mathbb{I}\{\cdot\}$ is the indicator function.
	Then (\ref{f19f})--(\ref{f19e}) follow from (\ref{f1}).
	Moreover, (\ref{f19c}) holds with equality at the middle term by \eqref{eq:rho_z_def}.
	Finally, \eqref{eq:rho_z_def} implies (\ref{f19d}) and (\ref{f19d3}); and if $\rho_{m^s}>0$ then the numerator in \eqref{eq:rho_z_def} is positive, hence $z_{m^s}=1$ and (\ref{f19d2}) holds. Thus $(\boldsymbol{\rho},\mathbf{x},\mathbf{z})$ is feasible for ($\mathbb{P}4$).

	For any feasible $(\boldsymbol{\rho},\mathbf{x},\mathbf{z})$ of ($\mathbb{P}4$), $\mathbf{x}$ satisfies (\ref{f1})--(\ref{f3}) and is therefore feasible for ($\mathbb{P}3$).

	Hence the feasible sets of ($\mathbb{P}3$) and ($\mathbb{P}4$) coincide when projected onto $\mathbf{x}$. Since both problems minimize the same $\Phi(\mathbf{x})$, they have the same optimal objective value and the same optimal $\mathbf{x}^\star$.
\end{proof}

\setcounter{theorem}{2}
\begin{theorem}
	The PW-kGPP problem is NP-hard.
	\label{t3}
\end{theorem}

\begin{proof}
	The balanced graph partitioning (BGP) problem is a known problem~\cite{balanced}: given a graph $G = (V, E)$ and an integer $k \geq 2$, the task is to partition the vertex set $V$ into $k$ disjoint subsets $V_1, V_2, ..., V_k$ such that:
	the size of each subset satisfies $|V_p| \leq \lceil |V|/k \rceil for \, p = 1, 2, ..., k$ and the cut weight is minimized.
	It is very easy to map any BGP instance to a PW-kGPP instance, proceeding as follows:
	assign a weight of $w_n = 1$ to each vertex in $V$, define the proportion for each subset as $r_p = 1/k$ for all subsets $V_p$, and use a tolerance coefficient $\theta = 0$.
	Thus, any problem belonging to BGP can be solved by an algorithm designed for PW-kGPP; that is, \nolinebreak BGP can be reduced to PW-kGPP.
	Since BGP is NP-hard, PW-kGPP is also NP-hard.
\end{proof}

\begin{corollary}
	\textup{($\mathbb{P}4$)} is NP-hard.
    \label{coro2}
\end{corollary}

\begin{proof}
	According to the above analysis, a polynomial-time reduction exists that transforms any instance of PW-kGPP into an instance of ($\mathbb{P}4$).
	According to Theorem \ref{t2}, it can be proved that ($\mathbb{P}4$) is NP-hard.
\end{proof}

\begin{corollary}
	\textup{($\mathbb{P}3$)} is NP-hard.
	\label{coro3}
\end{corollary}

\begin{proof}
	Combining Theorem \ref{t2} and Corollary \ref{coro2}, ($\mathbb{P}3$) can be easily proved to be NP-hard.
\end{proof}

\begin{lemma}
Any global optimal solution $(\boldsymbol{\rho}^*,\mathbf{x}^*,\mathbf{z}^*)$ to \textup{($\mathbb{P}4$)}
is a global optimal solution to \textup{($\mathbb{P}4\textup{-BOP}$)}.
\label{l1}
\end{lemma}

\begin{proof}
Feasibility in ($\mathbb{P}4$) implies $\boldsymbol{\rho}^*$ satisfies (\ref{f19f})--(\ref{f19e}) 
and $(\mathbf{x}^*,\mathbf{z}^*)$ satisfies the lower-level constraints under $\boldsymbol{\rho}^*$.
If $(\mathbf{x}^*,\mathbf{z}^*)$ were not lower-level optimal, a feasible
$(\tilde{\mathbf{x}},\tilde{\mathbf{z}})$ under $\boldsymbol{\rho}^*$ with
$\Phi(\tilde{\mathbf{x}})<\Phi(\mathbf{x}^*)$ would yield a better feasible solution to ($\mathbb{P}4$),
contradicting optimality. Hence $(\mathbf{x}^*,\mathbf{z}^*)$ solves the lower-level and $\boldsymbol{\rho}^*$
is upper-level optimal.
\end{proof}

\begin{lemma}
Any global optimal solution $(\boldsymbol{\rho}^{\dagger},\mathbf{x}^{\dagger},\mathbf{z}^{\dagger})$ to \textup{($\mathbb{P}4\textup{-BOP}$)}
is a global optimal solution to \textup{($\mathbb{P}4$)}.
\label{l2}
\end{lemma}

\begin{proof}
By definition of ($\mathbb{P}4\textup{-BOP}$), $(\boldsymbol{\rho}^{\dagger},\mathbf{x}^{\dagger},\mathbf{z}^{\dagger})$
satisfies all constraints of ($\mathbb{P}4$), hence is feasible for ($\mathbb{P}4$).
If it were not globally optimal for ($\mathbb{P}4$), there would exist a feasible $(\bar{\boldsymbol{\rho}},\bar{\mathbf{x}},\bar{\mathbf{z}})$ to ($\mathbb{P}4$) such that
$\Phi(\bar{\mathbf{x}}) < \Phi(\mathbf{x}^{\dagger})$.
Then, for the upper-level decision $\bar{\boldsymbol{\rho}}$, the lower-level optimal value of ($\mathbb{P}4\textup{-BOP}$) is at most $\Phi(\bar{\mathbf{x}})$, implying an upper-level objective value strictly smaller than $\Phi(\mathbf{x}^{\dagger})$, which contradicts the global optimality of $(\boldsymbol{\rho}^{\dagger}, \mathbf{x}^{\dagger}, \mathbf{z}^{\dagger})$ for ($\mathbb{P}4\textup{-BOP}$).
Therefore, $(\boldsymbol{\rho}^{\dagger}, \mathbf{x}^{\dagger}, \mathbf{z}^{\dagger})$ is a global optimal solution to ($\mathbb{P}4$).
\end{proof}

\begin{theorem}
	Problems \textup{($\mathbb{P}4$)} and \textup{($\mathbb{P}4\textup{-BOP}$)} are equivalent in terms of global optimality: they have the same global optimal objective value and the same set of global optimal solutions.
    \label{t4}
\end{theorem}

\begin{proof}
    Following from Lemmas~\ref{l1} and \ref{l2}, any global optimal solution to ($\mathbb{P}4$) is also a global optimal solution to ($\mathbb{P}4\textup{-BOP}$), and vice versa. 
    Hence, the two problems share the same set of global optimal solutions. 
    Since both problems adopt the same objective function $\Phi(\cdot)$, they attain the same global optimal objective value.
\end{proof}

\begin{proposition}
	Suppose the outer layer can traverse all possible values of $\rho_{m^s}$, and the inner layer will solve the corresponding problems optimally. 
	The final solution obtained by the nested approach will be the optimal one for \textup{($\mathbb{P}2$)}.
	\label{T5}
\end{proposition}
\begin{proof}
	Suppose, for contradiction, that the nested approach does not yield the optimal solution to $(\mathbb{P}2)$. 
	Let $(\boldsymbol{\rho}^*, \mathbf{x}^*, \mathbf{f}^*)$ be an optimal solution. 
	Since the outer layer exhaustively traverses all possible $\boldsymbol{\rho}$, it will consider $\boldsymbol{\rho}^*$. 
	For this choice, the inner layer solves the corresponding PW-kGPP and IMCF problems optimally, yielding the minimum bandwidth resource cost for $\boldsymbol{\rho}^*$. 
	Therefore, the assumption is contradicted. 
\end{proof}
\section{Details of Definition 1}
\label{apdxc}

\begin{definition}[Proportional Weight-Constrained \textit{k}-way Graph Partitioning Problem, PW-kGPP]    
\begin{subequations}
	\begin{align}
	\label{f18a}
	\mathrm{PW}&\mathrm{-kGPP:} \min_{x^i_p} \sum_{(i, j) \in E} c_{ij} \sum_{p=1}^k x^i_p (1 - x^j_p) \\
	\label{f18b}
	&s.t. \quad \forall p \in \{1, 2, ..., k\}: \nonumber \\
	& \quad  \quad \, (1-\theta)r_p
	\leq \frac{\sum_{i \in V_p} w_i}{\sum_{i \in V} w_i}
	\leq (1+\theta)r_p,
	\end{align}
\end{subequations}
	where $\theta$ is a configurable tolerance coefficient.
	\label{D3}
\end{definition}
\section{Supplemental Materials for ABS Implementation}
\label{apdxb}

\begin{algorithm}[!h]
\SetAlgoLined
\scriptsize  
\KwIn{$G_i^v$, $G^s$, Worker Number $N_W$}
\KwOut{Best SEM Solution $(\mathbf{x^*,f^*})$}

Set archive capacity $N_A$, counter $c_{terminate} = 0$\;
Initialize the set of communication connections $\mathbf{conns}$ with $N_W$ $conn$s and archive $\mathbf{A} \leftarrow \emptyset$\;
Invoke Algorithm \ref{alg:worker} given $G_i^v$ and $G^s$ as input to activate $N_W$ workers, and assign each of them a $conn \in \mathbf{conns}$\;
\While{$c_{terminate} \neq N_W$}{
    \For{conn $\in \mathbf{conns}$}{
        \If{conn.has\_data()}{
            $(info, particle) \leftarrow$ conn.receive()\;
            \Switch{$info$}{
                \Case{`best\_particle'}{
                    \eIf{$|\mathbf{A}| < N_A$}{ 
                        Add $particle$ to $\mathbf{A}$\;
                    }{
                        Replace random $\mathbf{A}[r]$ if $\mathcal{F}(particle.solution) < \mathcal{F}(\mathbf{A}[r].solution)$\;
                    }
                }
                \Case{`newP\_request'}{
                    conn.send($\mathbf{A}[\text{rand}(|\mathbf{A}|)]$)\;
                }
                \Case{`terminate'}{
                    $c_{terminate} \leftarrow c_{terminate} + 1$\;
                }
            }
        }
    }
}
\Return $\argmin_{particle \in \mathbf{A}} \mathcal{F}(particle.solution)$\;
\caption{\OURALGO}
\label{alg:master}
\end{algorithm}

\begin{algorithm}[h]
\scriptsize  
\SetAlgoLined
\KwIn{$G_i^v$, $G^s$, Elite Set $\mathbf{ES}^{iter}$, Common Set $\mathbf{CS}$, Local Archive $\mathbf{LA}$}
\KwOut{Updated common set $\mathbf{CS}$}
\For{$particle \in \mathbf{CS}$}{
	\If{$iter == 1$}{
		Set $particle.dimension$ as the number of elements in $particle.position$\;
	}
	Set $r\_position = [0] * particle.position$.len()\;
	Get $temp\_position$ for $particle$ using (\ref{f30})-(\ref{f33})\;
	$particle.position = [max(0, \rho) \, for \, \rho \,\, in \, temp\_position]$\;
	Find the $particle.dimension$ largest elements of $particle.position$, normalize them, and set them into their corresponding positions in $r\_position$\;
	$\mathbf{x'} \leftarrow \, metispartition(\cdot)$, $\mathbf{f'} \leftarrow \, kshortestpath(\cdot)$\;
	\If{($\mathbf{x'}, \mathbf{f'})$ satisfies (\ref{f1})-(\ref{f6})}{
		$particle.dimension$ = \\ max(1, $particle.dimension-1$)
	}
}
\caption{Swarm Evolutionary Update}
\label{alg:evolve}
\end{algorithm}

\begin{algorithm}[h]
\SetAlgoLined
\scriptsize  
\KwIn{$G_i^v$, $G^s$, Communication Connection $conn$}
Set swarm size $N_S$, max iteration times $ITR_{max}$, elite set capacity $N_E$, local archive capacity $N_{LA}$, particle with global best fitness $gbest=\{\}$\;
Initialize the beginning elite set $\mathbf{ES}^0 \leftarrow \emptyset$, common set $\mathbf{CS} \leftarrow \emptyset$, and local archive $\mathbf{LA} \leftarrow \emptyset$\;

Invoke Algorithm \ref{alg:init} given $G_i^v$, $G^s$, and $N_S$ as input to get initial $\mathbf{Particles}$\;
\For{$iter = 1$ \KwTo $ITR_{max}$}{
	Compute the fitness value $\mathcal{F}(particle.solution)$ for $particle \in \mathbf{Particles}$\;
	Sort $\mathbf{Particles}$ by fitness values, update $gbest$, and put $N_E$ best particles into $\mathbf{ES}^{iter}$ and $(N_S - N_E)$ worst $particle$s into $\mathbf{CS}$\;
	\If{$gbest$ is updated}{
		$conn$.send(\textit{`best\_particle'}, $gbest$)\;
	}
	\If{$\mathbf{ES}^{iter} == \mathbf{ES}^{iter-1}$}{
		$conn$.send(\textit{`newP\_request'})\;
		$n\_particle$ = $conn$.receive()\;
		\eIf{$\mathbf{LA}$.len() $<$ $N_{LA}$}{
			Add $n\_particle$ to $\mathbf{LA}$\;
		}{
			Replace the worst particle $w\_particle \in \mathbf{LA}$ with $n\_particle$, if $\mathcal{F}(n\_particle.solution)$ is smaller\;
		}
	}
	Invoke Algorithm \ref{alg:evolve} given $\mathbf{CS}$, $\mathbf{ES}^{iter}$, and $\mathbf{LA}$ as input to 
	make $particle \in \mathbf{CS}$ evolve\;
	$iter = iter + 1$\;
}
$conn$.send(\textit{`terminate'})\; 
\caption{\OURALGO \ Iteration Worker}
\label{alg:worker}
\end{algorithm}

\begin{algorithm}[h]
\SetAlgoLined
\scriptsize  
\KwIn{$G_i^v$, $G^s$, Swarm Size $N_S$}
\KwOut{Particle Swarm $\mathbf{Particles}$}
Initialize $\mathbf{Particles} \leftarrow \emptyset$ and $\boldsymbol{\rho}, \mathbf{x}, \mathbf{f} \leftarrow None$\;
\For{$i = 1$ \KwTo $N_S$}{
	$(\boldsymbol{\rho}, \mathbf{x}) \leftarrow$ \textit{init\_solver($G_i^v$, $G^s$)}\;
	$\mathbf{f} \leftarrow$ \textit{kshortestpath($G_i^v$, $G^s$, $\mathbf{x}$)}\;
	Add particle$(\boldsymbol{\rho}, (\mathbf{x},\mathbf{f}))$ to $\mathbf{Particles}$\;
}
\Return $\mathbf{Particles}$\;

c
\SetKwFunction{FInit}{init\_solver}
\SetKwProg{Fn}{Function}{:}{}
\Fn{\FInit{$G_i^v$, $G^s$}}{
	Set $max\_depth, depth = 0$; $\boldsymbol{\rho}, \mathbf{x} \leftarrow None$\;
	Initialize the chosen \CN \ set, current neighbor set, next neighbor set, and useless neighbor set: $\mathbf{chosen}, \mathbf{cNBR}, \mathbf{nNBR}, \mathbf{uNBR} \leftarrow \emptyset$\;
	Select $m^s$ with $C(m^s)>0$ by resource-weighted randomization; update $\mathbf{chosen}$ and $\mathbf{cNBR}$\;
	\While{$\mathbf{chosen}$.size() $< min(|N^s|, |N_i^v|)$ \textbf{and} $depth \leq max\_depth$}{
		\eIf{$\mathbf{cNBR} \neq \emptyset$}{
			Select $m^s$ from $\mathbf{cNBR}$ via resource-weighted randomization, add it to $\mathbf{chosen}$, remove it from $\mathbf{cNBR}$\;
			For each neighbor of $m^s$, if it is not in $\mathbf{chosen}$, add it to $\mathbf{nNBR}$ if it has non-zero resources; otherwise, add it to $\mathbf{uNBR}$\;
			Calculate $\rho_{m^s}$ as: $\forall m^s \in \mathbf{chosen}, \rho_{m^s} = C(m^s) / \sum_{m^s \in \mathbf{chosen}} C(m^s_i)$, set $\rho_{m^s} = 0$ for $m^s \notin \mathbf{chosen}$, and update $\boldsymbol{\rho}$\;
			$\mathbf{x'} \leftarrow$ \textit{metispartition}($G^v_i$, $\boldsymbol{\rho}$)\;
			\If{$\mathbf{x'}$ satisfies constraints (\ref{f1})-(\ref{f3})}{set $\mathbf{x} = \mathbf{x'}$; \textbf{break}}
		}{
			\If{$\mathbf{nNBR} \neq \emptyset$}{
				Set $\mathbf{cNBR}$ as $\mathbf{nNBR}$, $\mathbf{nNBR} \leftarrow \emptyset$, and $\mathbf{uNBR} \leftarrow \emptyset$\;
			}
			\uElseIf{$\mathbf{uNBR} \neq \emptyset$}{
				Copy $\mathbf{uNBR}$ to $\mathbf{uNBR'}$ and set $\mathbf{nNBR}, \mathbf{uNBR} \leftarrow \emptyset$\;
				For each neighbor of $m^s \in \mathbf{uNBR'}$, if it is not in $\mathbf{chosen}$, add it to $\mathbf{nNBR}$ if it has non-zero resources; otherwise, add it to $\mathbf{uNBR}$\;
				$max\_depth = max\_depth + 1$\;
			}
			\Else{break\;}
		}
		
	}
	\Return $(\boldsymbol{\rho}, \mathbf{x})$\;
}
\caption{Particle Swarm Initializing}
\label{alg:init}
\end{algorithm}

\subsection{Pseudocode for Algorithms 1 to 4}
The pseudocode can be found in Algorithms~\ref{alg:master}, \ref{alg:evolve}, \ref{alg:worker}, \ref{alg:init}.

\subsection{Details of Time Complexity Analysis}
In practical scenarios, the most time-consuming component of the algorithm is the iterative process executed by the workers (Algorithm \ref{alg:worker}), which forms the basis of our complexity analysis.
During initialization (Algorithm~\ref{alg:init}), \nolinebreak the \textit{init\_solver} and \textit{kshortestpath} functions are invoked $|N_S|$ times.
The dominant complexity in \textit{init\_solver} arises from the \textit{metispartition} function.
Consequently, the overall complexity of this phase is: $\mathcal{O}(|N_S| \cdot (|L^s|+k\cdot|N^s|\cdot(|L^s|+|N^s|\log|N^s|)))$. \nolinebreak
In the iterative main body, the algorithm performs $ITR_{max}$ iterations. \nolinebreak
For each iteration, the swarm evolutionary update (Algorithm \ref{alg:evolve}) processes $|N_S|$ particles. 
The most time-consuming operations for each particle are those invocations of the \textit{metispartition} and \textit{kshortestpath} functions. \nolinebreak
Therefore, the complexity of this phase is: $\mathcal{O}(ITR_{max}\cdot|N_S| \cdot (|L^s|+k\cdot|N^s|\cdot(|L^s|+|N^s|\log|N^s|)))$.
By combining the complexities of the two phases, the overall complexity of the algorithm executed by each worker can be simplified as: $\mathcal{O}(ITR_{max}\cdot|N_S| \cdot k\cdot|N^s|\cdot(|L^s|+|N^s|\log|N^s|))$.

\section{Experimental Details}
\label{apdxd}

\subsection{Compared Methods}

\textit{Heuristic}--
\begin{itemize}
	\item \textbf{RW-BFS}~\cite{rw} is a classic rank-based heuristic considering CPN topology and ranking each \CN \ via random walk.
		The \SF \ mapping and correlated \LL \ mapping are conducted simultaneously based on the node rank by breadth-first searching (BFS).
	\item \textbf{RMD}~\cite{rmd} introduces a service entity topology preprocessing technology and uses a heuristic strategy for mapping, which achieves a topology-aware solution through graph coarsening, partitioning, uncoarsening, and merging nodes with strong link relations.
\end{itemize}

\textit{Meta-heuristic}--
\begin{itemize}
	\item \textbf{EA-PSO}~\cite{psovne} is a classic meta-heuristic that adopts a discrete PSO algorithm for directly searching the \SF \ mapping solution, along with the shortest path algorithm to map the related \LL s. 
	\item \textbf{GA-STP}~\cite{gastp} relies on Genetic Algorithm (GA) to jointly coordinate \SF \ and \LL \ mappings. Meanwhile, a novel heuristic conciliation mechanism is embedded in the exploration process of the GA algorithm to handle a set of potentially infeasible \LL \ mapping solutions.
\end{itemize}

\textit{Learning-based}--
\begin{itemize}
	\item \textbf{RL-QoS}~\cite{pgcnn} is a model-free reinforcement learning algorithm that leverages a policy gradient method with a convolutional neural network (CNN) and softmax layers to make node decisions based on historical data. It estimates the policy network through the CNN and updates the gradient using the policy gradient algorithm.
	\item \textbf{GAL}~\cite{gal} employs a two-stage framework: initially, a graph neural network is trained with supervised learning to emulate classic node ranking (e.g., RW-BFS), followed by reinforcement learning for further improvement, with final \SF\ and \LL\ mapping performed using a BFS strategy. 
\end{itemize}

\subsection{Evaluation Metrics}

Denote $\Upsilon(t)$ as the set of all arrived SE topologies and $\Upsilon^a(t)$ as the subset of successfully mapped ones.
We set $w_c=w_b=\pi_c=\pi_b=1$ in (\ref{f7}) and (\ref{f8}) following~\cite{gal,rw}.
\begin{itemize}
    \item \textbf{Acceptance ratio:} This metric is formulated in (\ref{f40}), and it directly reflects the proportion of service entities successfully mapped by the algorithm.
\begin{equation}
	p_{ac}(t) = \frac{|\Upsilon^a(t)|}{|\Upsilon(t)|}.
	\label{f40}
\end{equation}

    \item \textbf{Revenue:} This metric reflects the total revenue generated by all successfully accepted service requests at a given time $t$.
    (\ref{f34}) shows the formula.
    \begin{align}
	\sum_{G^v_i \in \Upsilon^a(t)} \mathcal{R}(G_i^v).
	\label{f34}
    \end{align}

    \item \textbf{Long-term average revenue (LT-AR):} It complements the total revenue, measuring the ability of the algorithm to generate revenue throughout the simulation period continuously\footnote{Rather than infinite time horizons, we approximate these long-term metrics by cumulative metrics at the simulation time of the last arrival request.}.
    This metric is expressed in (\ref{f35}):
    \begin{align}
	\lim_{T \rightarrow \infty}\frac{\sum^T_{t=0} \mathcal{R}(G^v(t))}{T}.
	\label{f35}
    \end{align}

    \item \textbf{Profit:} It is formulated in (\ref{f41}). This metric integrates revenue, acceptance ratio, and cost to comprehensively evaluate the practical value of the algorithm, reflecting both the efficient utilization of computing resources and the overall service satisfaction.
\begin{align}
	(p_{ac}(t))^\varkappa \cdot \big(\sum_{G^v_i \in \Upsilon^a(t)} \mathcal{R}(G_i^v) - \omega \cdot \sum_{G^v_i \in \Upsilon^a(t)} \mathcal{C}(G_i^v)\big),
	\label{f41}
\end{align}
where $\varkappa \geq 1$ is a configurable weighting coefficient that adjusts the relative importance of the service acceptance ratio for the system's achieved profit;
and $0< \omega < 1$ serves as a tunable weighting factor that determines the system’s willingness to trade additional network resource consumption for higher computing resource utilization.

    \item \textbf{\CN\ Resource Utilization (CU-Ratio):} This metric indicates the overall utilization of \CN s within the CPN topology by service workloads at a given time $t$. 
    It represents the ratio of computing resources utilized by all currently accepted service requests at a specific time to the total initial computing capacity available across all \CN s, which is expressed in (\ref{f113}):
    \begin{align}
    \frac{\sum_{m^s \in N^s} U_C(m^s, t)}{
    \sum_{m^s \in N^s} C_{init}(m^s)},
	\label{f113}
    \end{align}
    where $U_C(m^s, t)$ is the instantaneous computing resource usage of CN $m^s$ at a timestamp $t$, and $C_{init}(m^s)$ denotes the initial computing capacity of the corresponding node.   

    \item \textbf{Revenue to cost ratio (RC-Ratio):} This metric quantifies the efficiency of mapping decisions by measuring the ratio between the cumulative revenue and cost of all accepted requests at time $t$, as shown in (\ref{f36}).
    \begin{align}
	\frac{\sum_{G^v_i \in \Upsilon^a(t)} \mathcal{R}(G_i^v)}{\sum_{G^v_i \in \Upsilon^a(t)} \mathcal{C}(G_i^v)}.
	\label{f36}
    \end{align}
    We do not use cost alone as a performance metric because, by definition, although it reflects the overhead of LLnM, higher cost often indicates that more services have been mapped. Therefore, cost alone does not meaningfully reflect algorithm efficiency. Instead, the RC-ratio better captures bandwidth usage efficiency by relating total revenue to total cost.

    \item \textbf{Long-term revenue to cost ratio (LT-RC-Ratio):} It calculates the ratio of long-term average revenue and cost generated during the service provisioning process.
    The formula is shown in (\ref{f37}).
    \begin{align}
	\lim_{T \rightarrow \infty} \frac{\sum^T_{t=0} \mathcal{R}(G^v(t))}{\sum^T_{t=0} \mathcal{C}(G^v(t))}.
	\label{f37}
    \end{align}
\end{itemize}
\end{appendices}

\vfill

\end{document}